%% file: 00-Main-cogMI.tex
\documentclass[conference]{IEEEtran}
\IEEEoverridecommandlockouts
\usepackage{comment}
\usepackage[most]{tcolorbox}
\usepackage{multirow}
\usepackage{cite}
\usepackage{amsmath,amssymb,amsfonts}
\usepackage{amsthm}
\usepackage{graphicx}
\usepackage{textcomp}
\usepackage{paralist}
\usepackage{xcolor}
\usepackage[hidelinks]{hyperref}
\usepackage[noend]{algpseudocode}
\usepackage[normalem]{ulem}
\usepackage{subfigure}
\usepackage{tabularx}
\usepackage{array}
\usepackage{color}
\usepackage{tabulary}
\usepackage{colortbl}
\usepackage{comment}
\usepackage{pdflscape}
\usepackage{float}
\usepackage{listings}
\usepackage[]{mdframed}
\usepackage{fvextra}
\usepackage{caption}
\usepackage{paralist}
\usepackage{adjustbox}
\usepackage{amsmath, amssymb}
\usepackage{algorithm}
\usepackage{multicol}
\usepackage{placeins}
\usepackage{algpseudocode}
\usepackage{mathtools}
\usepackage{rotating}
\usepackage{makecell}
\usepackage{seqsplit}
\usepackage[obeyspaces]{xurl}
\usepackage{ragged2e}
\usepackage{booktabs}
\newcolumntype{Y}{>{\RaggedRight\arraybackslash\hspace{0pt}}X}
\newcolumntype{L}[1]{>{\raggedright\arraybackslash}p{#1}}
\newcolumntype{V}[1]{>{\raggedright\arraybackslash}m{#1}}
\newtheorem{defn}{Definition}
\newtheorem{lemma}{Lemma}
\newtheorem{property}{Property}

\newtheorem*{definition*}{Definition}
\newtheorem{prop}{Proposition}
\def\BibTeX{{\rm B\kern-.05em{\sc i\kern-.025em b}\kern-.08em
    T\kern-.1667em\lower.7ex\hbox{E}\kern-.125emX}}
\begin{document}

\title{AWARE-RG: Abstraction Aware Explanation of Resiliency Graphs for Cyber Physical Systems\\
% {\footnotesize \textsuperscript{*}Note: Sub-titles are not captured for https://ieeexplore.ieee.org  and
% should not be used}
% \thanks{Identify applicable funding agency here. If none, delete this.}
}

% \author{\IEEEauthorblockN{1\textsuperscript{st} Given Name Surname}
% \IEEEauthorblockA{\textit{dept. name of organization (of Aff.)} \\
% \textit{name of organization (of Aff.)}\\
% City, Country \\
% email address or ORCID}
% \and
% \IEEEauthorblockN{2\textsuperscript{nd} Given Name Surname}
% \IEEEauthorblockA{\textit{dept. name of organization (of Aff.)} \\
% \textit{name of organization (of Aff.)}\\
% City, Country \\
% email address or ORCID}
% \and
% \IEEEauthorblockN{3\textsuperscript{rd} Given Name Surname}
% \IEEEauthorblockA{\textit{dept. name of organization (of Aff.)} \\
% \textit{name of organization (of Aff.)}\\
% City, Country \\
% email address or ORCID}
% \and
% \IEEEauthorblockN{4\textsuperscript{th} Given Name Surname}
% \IEEEauthorblockA{\textit{dept. name of organization (of Aff.)} \\
% \textit{name of organization (of Aff.)}\\
% City, Country \\
% email address or ORCID}
% \and
% \IEEEauthorblockN{5\textsuperscript{th} Given Name Surname}
% \IEEEauthorblockA{\textit{dept. name of organization (of Aff.)} \\
% \textit{name of organization (of Aff.)}\\
% City, Country \\
% email address or ORCID}
% \and
% \IEEEauthorblockN{6\textsuperscript{th} Given Name Surname}
% \IEEEauthorblockA{\textit{dept. name of organization (of Aff.)} \\
% \textit{name of organization (of Aff.)}\\
% City, Country \\
% email address or ORCID}
% }

\author{
\IEEEauthorblockN{Shadaab Kawnain Bashir, Rakesh Podder, Philip D. Hopkins, Sarath Sreedharan, Indrakshi Ray, and Indrajit Ray}
\IEEEauthorblockA{Computer Science Department, Colorado State University, Fort Collins, Colorado, USA \\ 
\{shadaab.bashir, rakesh.podder, philip.hopkins, sarath.sreedharan,  indrakshi.ray, indrajit.ray\}@colostate.edu}
\\ 
}

\maketitle

\begin{abstract}
\input{01-Abstract}
\end{abstract}

\begin{IEEEkeywords}
cyber physical system, security, explainability, need-to-know
\end{IEEEkeywords}

% ------------------------
\section{Introduction} \label{introduction}
\input{10-Introduction-IJR}
% ------------------------
\section{Related Work} \label{related_work}
\input{30-RelatedWork}
\section{Background} \label{background}
\input{20-Background-rakesh}

% ------------------------
\section{AWARE-RG Overview and Security Model}
\label{sec:overview}
\input{Overview}

\subsection{Threat Model}
\input{30-Threat-Model_new}
\label{sec:threat-model}

\subsection{Running Example} \label{example}
\input{55-RunningExample-FlareSystem}

\section{AWARE-RG Lattice for Resiliency Graphs} 
\label{lattice}
\input{60-Lattice_SandP}
%--------------------------------------
\section{AWARE-RG Approach} \label{framework}

\input{41-AWARE_RG_New-SandP}

% ------------------------ 
\section{Evaluation and Analysis} \label{eval}
\input{70-Evaluation}
\input{90-FutureDirectionandConclusion}
\label{future_conclusion}
% % ------------------------
\section*{Acknowledgment}
This work is partially supported by the U.S. Office of
Naval Research (ONR) under award N000142612041. The views and conclusions expressed are those of the authors and do not necessarily reflect those of ONR or other federal agencies.

\bibliographystyle{IEEEtran}
\bibliography{reference}
\input{Appendix_New}
\end{document}

%% file: 01-Abstract.tex
Understanding the security posture of cyber physical systems (CPS) requires reasoning across both information technology (IT) and operational technology (OT). However, a security analyst often holds an incomplete, role-specific view of the underlying system which has most likely resulted from their knowledge limited by a need-to-know framework. Such an incomplete view can potentially lead them to incorrectly assess if a cyber attack produces an untoward physical consequence resulting in making uninformed decisions about system resilience. In this work, we present AWARE-RG (Abstraction Aware Explanation of Resiliency Graphs), a semi-automated, human-in-the-loop framework that provides an analyst with a role-bounded, need-to-know explanations of resiliency paths in a CPS. The AWARE-RG abstraction structure is derived using Formal Concept Analysis over MITRE ATT\&CK for ICS tactics, ordered by the ICS Cyber Kill Chain. For valid attack-fault queries, plan repair and backward means-end analysis identify the missing dependencies, explain why the physical hazard is reachable, and suggest defensive strategies. For invalid attack-fault queries, AWARE-RG localizes the first abstraction level at which the path fails, and reports confirmed blocking conditions as hardening pointers.

%% file: 10-Introduction-IJR.tex
Cyber Physical Systems (CPS), which form the backbone of many critical infrastructures, tightly couple digital control with operational technology (OT) with other information technology (IT) systems. The attack surface of a CPS %is not confined to any single domain; it 
spans IT  and OT assets and their dependencies. %between them.
A compromise in an enterprise IT network can propagate through OT, disrupt programmable logic controllers (PLCs), and ultimately trigger physical hazards \cite{noauthor_blackenergy_2023, 10171308, falliere_w32stuxnet_2011}, potentially causing serious damage.
Managing this attack surface requires collaborative and coordinated reasoning among multiple stakeholders, which includes IT and OT security analysts, engineers, plant managers, and incident responders, with every stakeholder holding a specialized but incomplete view of the system. These incomplete views are often unavoidable owing to {\em need-to-know} and {\em separation of duty} principles in place in corporate structures and as reflected in standards such as IEC 62443 \cite{marszal2019isa} and NIST SP 800-82 \cite{stouffer2023guide}.
%
% The critical security consequence is that stakeholders routinely form mental models of the system that diverge from its actual behavior. 
Thus, the analysts develop a mental model of the system that can diverge from the actual CPS behavior. 
%potentially resulting in improper or inadequate security solutions.

Such incomplete visibility into CPS security can result in erroneous decisions. An analyst might have a mental model of an attack scenario (referred to as an attack path) succeeding when in reality it does not, or vice versa, misallocating defensive resources toward an unknown threat. Alternatively, an analyst who does not understand why a path fails cannot reliably maintain the conditions required to block the propagation of an attack. If these knowledge gaps are left unaddressed, they can weaken even well resourced security measures. Security analysts need to understand how an attack succeeds or why it fails, but that understanding must be calibrated to their knowledge level and bounded by their organizational role. Existing frameworks on CPS security analysis, such as those in \cite{bashir2024, akbarzadeh2022unified, shilenge2022optimization, ray2023explainable}, fail
to address this problem.

In this work, we present AWARE-RG (Abstraction Aware Explanation of Resiliency Graphs), a semi-automated framework that balances two competing needs: providing IT and OT analysts with sufficient cross-domain information to correct gaps in their understanding, while limiting that information to what is necessary and authorized under the \textit{need-to-know} principle. AWARE-RG adopts the Resiliency Graph (RG) framework of Bashir et al.~\cite{bashir2024}, which uses AI planning to integrate attack graphs from the IT side of a CPS with fault trees from the OT side, associating vulnerabilities with system failures so that operators can evaluate and manage system resiliency. We chose RG for two reasons:
\begin{inparaenum}[(i)]
    \item it models cross-domain dependencies between IT vulnerabilities and OT safety deterministically, which is our problem space, and
    \item AI planning aligns well with the causal formalism of human reasoning, which we use to generate human understandable explanations.
\end{inparaenum}
Since an RG captures the complete IT and OT security knowledge, sharing the full model with every analyst would violate \textit{need-to-know} constraints. AWARE-RG therefore provides each analyst with an explainable abstraction of the RG that is appropriate to their authorized role and expertise.

AWARE-RG builds on the Hierarchical Expertise-Level Modeling (HELM) paradigm~\cite{sreedharan2018hierarchical}, which organizes domain knowledge into a lattice of abstractions representing different levels of expertise. For a real-world CPS, the lattice generated by HELM is very large; we reduce it by grouping related attack indicators and impact conditions into semantically coherent clusters using Formal Concept Analysis (FCA)~\cite{wille1982restructuring} guided by MITRE ATT\&CK for ICS tactics~\cite{noauthor_mitre_ics}, and ordering those clusters by the ICS Cyber Kill Chain~\cite{assante2015industrial} to reflect attack progression. The resulting smaller lattice preserves RG's semantics and supports expertise relevant explanations.

% AWARE-RG is derived from
% %This is accomplished by using 
% the Hierarchical Expertise-Level Modeling (HELM) paradigm \cite{sreedharan2018hierarchical}, which presents knowledge at different levels as a lattice of abstractions.
% For a real-world CPS, the lattice generated by HELM is very large. 
% We reduce it meaningfully by grouping related attack indicators and impact conditions into semantically coherent clusters using Formal Concept Analysis (FCA) \cite{wille1982restructuring}, guided by MITRE ATT\&CK for ICS tactics  \cite{noauthor_mitre_ics}, and ordering the clusters via the ICS Cyber Kill Chain \cite{assante2015industrial} model to reflect natural attack progression. This yields a smaller lattice without loss of semantic fidelity of RG, and is used to generate expertise relevant and analyst visibility specific explanations.

Our key contributions are as follows:
\begin{inparaenum}[(i)]
    \item We propose AWARE-RG, a semi-automated framework that augments RG with security analyst knowledge aware explanation. %for both valid and invalid attack-fault paths in CPS.
    \item We introduce a systematic way to create an abstraction lattice, which we term as AWARE-RG lattice for CPS attack-fault path reasoning using MITRE ATT\&CK for ICS and the ICS Cyber Kill Chain, capturing stages of adversarial behavior propagation. 
    \item We demonstrate how the AWARE-RG lattice can be used in CPS security, 
    % allowing identification of matches and mismatches between stakeholder understanding and actual system behavior 
    % \textcol
    localizing the abstraction boundary at which an analyst's mental model diverges from the actual system behavior.
    \item Implement AWARE-RG as a human-in-the-loop pipeline operating on planning compilations, and instantiate it on an LNG flare system CPS. %the Plan Fragment Mapper, which maps user steps to plan fragments; the Path Validity Analyzer, which analyzes why a path is executable; the Path Invalidity Analyzer, which analyzes why a path is not executable; and the Explanation Generator, which identifies the cause of success or failure and produces a minimal, stakeholder knowledge aligned natural language explanation.
    \item Finally, evaluate AWARE-RG using case studies over an industrial scenario. %demonstrating that valid paths explain feasible attack-fault propagation while invalid paths expose the system's resilience boundaries.
\end{inparaenum}

The rest of the paper is organized as follows: \S\ref{related_work} summarizes relevant literature;  
% Section () provides a brief overview of how AWARE-RG works.
\S\ref{background} provides the background of the study; 
% relevant for the study. 
\S\ref{sec:overview} gives an overview of AWARE-RG,
defines the threat model, 
and provides a running example of a CPS that is used to explain the rest of our work;  
\S\ref{lattice} describes the AWARE-RG lattice construction;
\S\ref{framework} describes the rest of the steps in the AWARE-RG pipeline; 
\S\ref{eval} evaluates our framework.
% Discussion of our study is provided in Section \ref{discussion}. 
Finally, \S\ref{future_conclusion} concludes the paper.
%\textcolor{blue}{please check some sections are missing}\textcolor{red}{Fixed}

%% file: 30-RelatedWork.tex
Attack graphs have been used to model how an attacker can chain vulnerabilities and privilege gains to compromise a system, for defense prioritization, and security analysis \cite{jha2002two, gonda2018analysis, 5936075,rakesh2025spear,podder2026exam,talukder2025vkg2ag}. 
These approaches assume that the analysts posing the queries are allowed to get the complete picture of the attack. 
In complex CPS, this assumption does not hold. In practice, roles are given information on a \textit{need-to-know} basis.
%Overall, these approaches are well suited for questions such as which paths an attacker might take to reach a target, which vulnerabilities to patch first to reduce system risk, etc. 
%However, they do not address interactive debugging of a user proposed attack scenarios or plan to reach the outcome, nor do they provide abstraction aware minimal explanations of why such scenario succeeds or fails, i.e., what crucial information is missing relative to the user’s knowledge abstraction.
%
Provenance graphs have been used in cybersecurity to explain why an outcome is chosen and to support the evidence behind it \cite{friedman2020provenance,wu2025provx}.
%To support plan evaluation and inspection Friedman et al. attaches provenance to plans \cite{friedman2020provenance}. Wu et al. proposed PorvX, a framework that explains black box Intrusion Detection System (IDS) predictions over provenance graphs by generating counterfactual explanations for Graph Neural Network (GNN) based Advanced Persistent Threat (APT) detection by identifying minimal subgraph changes that flip predictions \cite{wu2025provx}. 
Provenance graphs mainly provide answer to the question about why a prediction is made rather than explaining why a user proposed attack scenario fails across domains.

Planning based methodologies have been used to generate attack strategies and perform ``what-if'' analysis, while contrastive approaches explain why a planner chose one course of action over a user's alternative. Katz et al. justify planned actions through causal chains with a Graphical user interface (GUI) to limit explanation overload \cite{katz2017autonomous}. Miller compiles user foils into model restrictions and highlights plan differences \cite{miller2021contrastive}, Kim et al. mine contrastive linear temporal logic (LTL) specifications distinguishing two sets of plan traces \cite{kim2019bayesian} and Sukkerd et al.'s give verbal tradeoff explanations for multiobjective probabilistic planning \cite{sukkerd2018multiobjective}. In parallel, XAI work in CPS explains machine learning anomaly and intrusion detectors in terms of input features \cite{vigano2020explainable,farhan2025explainable,bose2025hybrid, kumar2025xai}, and Hoenig et al. survey the resulting challenges around context and user tailored explanation \cite{hoenig2024explainable}. These lines of work explain a model's own output, i.e., a planner's choice or a detector's prediction, rather than debugging an analyst's proposed attack scenario. None localizes where an analyst's mental model diverges from the actual system, and none bounds disclosure by organizational role, which is the gap AWARE-RG addresses.

%% file: 20-Background-rakesh.tex
\textbf{Resiliency Graph (RG)} models how cyber attacks and system failures interact in a CPS~\cite{bashir2024}. An RG is a directed graph $RG = \langle N, E_A, E_F \rangle$, where $N = H \times 2^{\mathbb{P}}$ is the set of nodes, with $H$ the set of hosts and $\mathbb{P}$ the set of propositional host attributes; each node $n = (h,\mu)$ represents a host $h$ together with its current attribute state $\mu \subseteq \mathbb{P}$. $E_A \subseteq N \times N$ is the set of \emph{attack edges}, each representing a potential attack step, and $E_F \subseteq N \times N$ is the set of \emph{fault edges}, each representing a potential fault propagation step. An edge $e = (n_i, n_j) \in E_A \cup E_F$ with $n_i = (h_i, \mu_i)$ and $n_j = (h_j, \mu_j)$ is enabled only if its preconditions hold at the source, $\mathit{pre}(e) \subseteq \mu_i$, and its
postconditions hold at the target, $\mathit{post}^{+}(e) \subseteq \mu_j$ and $\mu_j \cap \mathit{post}^{-}(e) = \emptyset$.

\smallskip
\noindent\textbf{Classical Planning}
A classical planning model is a tuple $M = \langle F, A, I, G \rangle$, where $F$ is a set of propositional fluents, $A$ a set of actions, $I \subseteq F$ is the initial state, and $G \subseteq F$ is the goal. A state is a set $s \subseteq F$ of fluents that hold. Each action $a \in A$ is a triple $a = \langle \mathit{pre}(a), \mathit{add}(a), \mathit{del}(a) \rangle$, \emph{applicable} in state $s$ if $\mathit{pre}(a) \subseteq s$, yielding $s' = (s \setminus \mathit{del}(a)) \cup \mathit{add}(a)$. A \emph{plan} is a sequence $\pi = \langle a_1, \dots, a_k \rangle$; we write $\mathit{Valid}(M, \pi) = \top$ and call $\pi$ \emph{valid} in $M$ if each $a_i$ is applicable in the state resulting from $\langle a_1, \dots, a_{i-1} \rangle$ applied to $I$, and the final state satisfies $G \subseteq s_k$. Unless stated otherwise, \emph{predicate} refers throughout to a grounded predicate, that is, a fluent obtained by binding a predicate schema's variables to domain objects. Since every action carries explicit preconditions and effects, planning traces identify precisely where a proposed action sequence succeeds or fails, making the formalism suited to machine checkable validation and \textit{need-to-know} explanation.

\noindent\textbf{Projection Abstraction}
Given a planning model $M = \langle F, A, I, G \rangle$ and a fluent set $P \subseteq F$, the \emph{projection} of $M$ with respect to $P$ is $f_P(M) = \langle F \setminus P,\, A|_P,\, I \setminus P,\, G \setminus P \rangle$, where $A|_P$ contains, for each $a \in A$, the action $\langle \mathit{pre}(a) \setminus P,\, \mathit{add}(a) \setminus P,\, \mathit{del}(a) \setminus P \rangle$. Projection induces the state abstraction $\alpha_P(s) = s \setminus P$, mapping each transition $\langle s_1, a, s_2 \rangle$ to $\langle \alpha_P(s_1), a, \alpha_P(s_2) \rangle$, following Sreedharan et al.~\cite{sreedharan2018hierarchical}.

\smallskip
\noindent\textbf{Hierarchical Expertise-Level Modeling (HELM)}
generates user specific explanations for individuals with varying levels of domain understanding ~\cite{sreedharan2018hierarchical}. It represents the user's knowledge as an abstraction
of the explainer's complete domain model and organizes candidate abstractions into a model lattice. A \emph{foil} is the scenario the user believes will produce a particular outcome. Given a foil, HELM
identifies the user's mental model as the most abstract model in which the foil still appears valid, then generates a minimal explanation as the smallest set of concretizations required to invalidate the foil.
% , computed via $A^{*}$ search or a greedy set cover procedure.

%% file: Overview.tex
\begin{figure*}[!h]
    \centering
    \includegraphics[width=.7\linewidth]{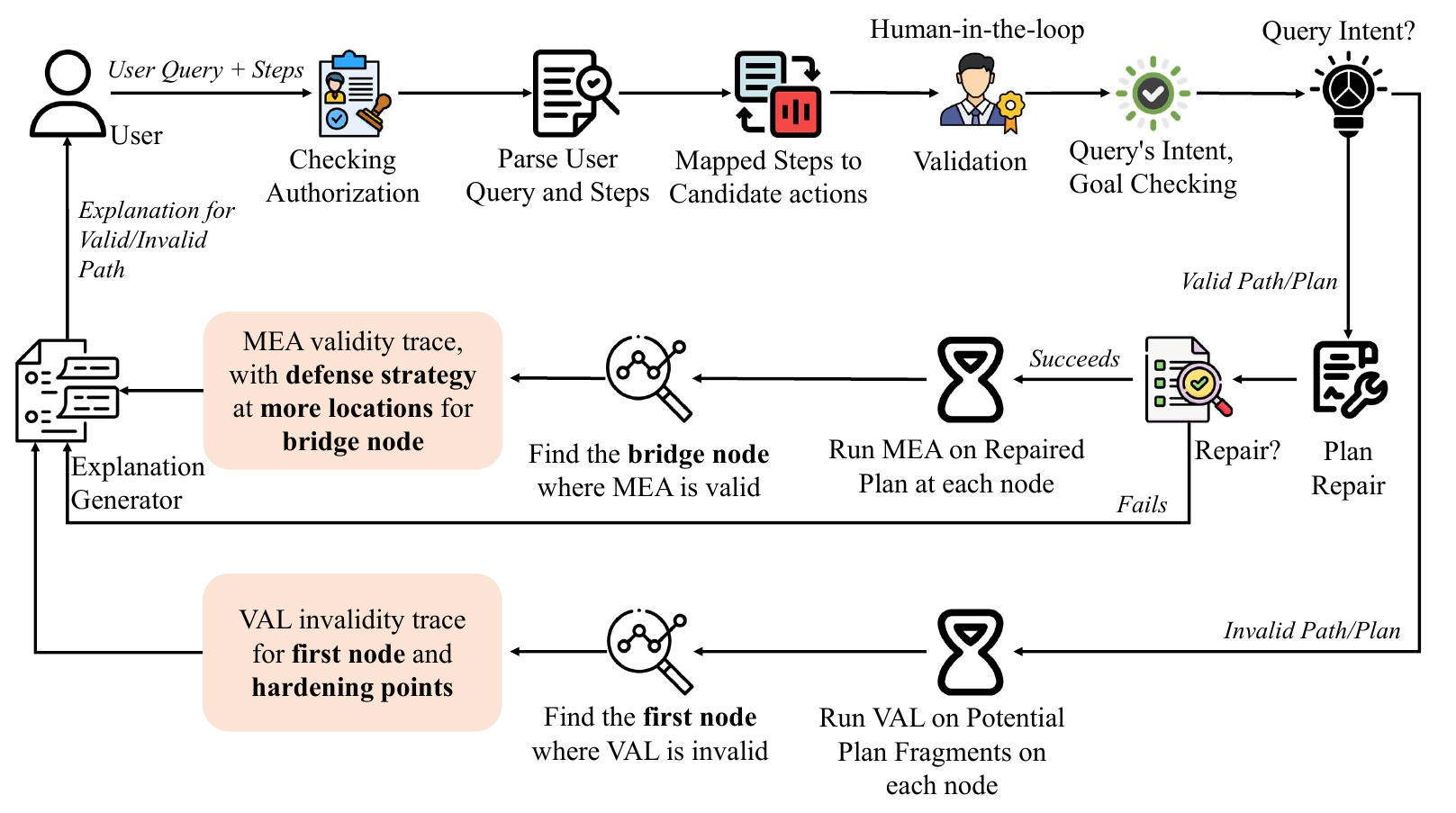}
    \caption{AWARE-RG explanation pipeline with guided abstraction search.}
    \label{fig:AWARE-RG_pipeline}
\end{figure*}

This section introduces AWARE-RG, followed by the threat model and the running example used throughout the paper.

\subsection{Overview of AWARE-RG}
AWARE-RG is an analyst aware security analysis framework for explaining whether a proposed attack-fault path can propagate through a CPS and trigger physical consequences, addressing the problem raised in \S\ref{introduction}, where IT and OT analysts reason about such paths using incomplete and role specific knowledge. It resolves this problem by validating analyst aware attack-fault paths against the RG's planning model while enforcing a \textit{need-to-know} disclosure policy, where each analyst receives only the abstraction of the RG model required to reason about their query, bounded by their authenticated role and authorized abstraction view.

AWARE-RG achieves this by organizing the CPS's RG into IT and OT analysts' partial views as a lattice structure, called the AWARE-RG lattice (\S\ref{lattice}), created by grouping RG's predicates into MITRE ICS ATT\&CK tactic clusters and determining an analyst's position in this lattice from their queries. The analyst submits a natural language query describing an expected attack-fault path, along with why they believe it does or does not lead to physical consequences. AWARE-RG translates the provided steps into grounded actions using the planning compilation of RG and infers the query intent, that is, whether the analyst asks why an attack-fault path exists or why it does not. Existence queries are handled by Path Validity Analysis (\S\ref{valid_plan}), which repairs incomplete paths and applies backward MEA to identify the bridge node containing sufficient information to explain why the hazard occurs; the repaired conditions later form the sole basis for the defensive measures suggested to the analyst. Non-existence queries are handled by Path Invalidity Analysis (\S\ref{invalid_plan}), which localizes the first abstraction node at which the path fails and extracts the unsatisfied preconditions as hardening pointers, which ground the recommendations returned to the analyst. The Explanation Generator (\S\ref{explanation_generator}) converts the resulting traces into \textit{need-to-know} explanations, confining every claim to the predicates visible at the analyst's authorized view. AWARE-RG thereby supports security decision making by validating analyst assumptions, localizing incomplete attack-path reasoning, identifying resilience boundaries, and producing defensive guidance without exposing the complete RG. Figure~\ref{fig:AWARE-RG_pipeline} illustrates the pipeline, which is described in detail in \S\ref{framework}. Domain experts construct the RG, validate tactic assignments, define need-to-know boundaries, and review step mappings, after which attack-fault path analysis and explanation generation run automatically.

%% file: 30-Threat-Model_new.tex
\textbf{System and trust assumptions} AWARE-RG is deployed as a cross-domain security analysis tool in a CPS roughly following the layered Purdue Reference Architecture \cite{williams1994purdue}, where layer specific experts, such as IT and OT analysts, submit queries to reason about attack-fault paths and receive explanations. We assume RG is a trusted AI-planning based model of the CPS built from known assets, configurations, dependencies, and vulnerabilities. The threat model focuses on an authorized but malicious analyst who queries AWARE-RG to obtain information beyond their \textit{need-to-know} boundary.

\textbf{Adversary model} We consider an adaptive adversary with a view of the CPS commensurate with some Purdue layer, who knows the AWARE-RG architecture (Kerckhoffs's principle \cite{petitcolas2025kerckhoffs,kerckhoff1983nielsen}) but not the content of the AWARE-RG lattice, and who interacts only through the query interface, issuing an adaptively chosen sequence of queries and observing all returned outputs. 
We consider three types of attacks.
% \begin{enumerate}

\smallskip
\noindent \textbf{Semantic mapping attack ($T_1$):} The adversary crafts a query whose steps are syntactically valid but semantically incorrect, causing the mapper to ground incorrect actions and producing an explanation that misdirects the expert's defensive response.

\smallskip
\noindent \textbf{Repair probing attack ($T_2$):} By submitting deliberately incomplete queries and observing which missing actions are returned, the adversary attempts to reconstruct the hidden RG model's
dependency structure, a confidentiality threat.

\smallskip
\noindent \textbf{Explanation inference attack ($T_3$):} By comparing which predicates appear in validity traces, repair reports, and hardening pointers across queries, the adversary determines the predicate level abstraction boundaries of other Purdue layer experts, identifying which
attack stages will not be flagged at adjacent layers.
% \end{enumerate}
Information disclosure via side channel analysis is outside the scope of this work. 

%% file: 55-RunningExample-FlareSystem.tex
We use the LNG at GL1/Z-SONATRACH, Algeria \cite{berrouane2016safety,kabir2023dynamic}, comprising tank, cold, and hot flare systems, whose procedures are automated by several PLCs to ensure the flare runs effectively, safely, and in accordance with standards. A Cisco Adaptive Security Appliance firewall and a Microsoft Windows Server connect these PLCs to the remote desktop. An attacker can reach the internal network either by exploiting the firewall (CVE-2018-0296) or via remote code execution vulnerabilities on the Windows Server (CVE-2019-0575, CVE-2019-0584, CVE-2019-0538), from which OT safety events can be triggered. The schematic and the vulnerabilities linked to each PLC are given in Figure~\ref{fig:flare} and Table~\ref{tab:plc_cves}, Appendix \ref{app:flare}. 

We recreate the flare system's RG model from the artifacts\footnote{GitHub:\url{https://github.com/skawnain04/Resiliency_Graph}} of \cite{bashir2024}, capturing the cross-layer dependencies between IT and OT, and compile it into a classical planning model.

%% file: 60-Lattice_SandP.tex
In CPS security, \textit{separation-of-duty} and \textit{need-to-know} principles give IT and OT personnel partitioned views of the same system. AWARE-RG represents each analyst's knowledge as a partial view ordered by detail level, so that a query can be evaluated at the knowledge level the analyst actually holds and explanations can be generated without disclosing unauthorized information. HELM provides this representation: each node is an admissible projection of $M^{RG}$ retaining a subset of concepts, and the ordering between nodes captures how knowledge states relate across the IT/OT boundary.

\begin{figure}[!h]
    \centering
    \includegraphics[width=\linewidth]{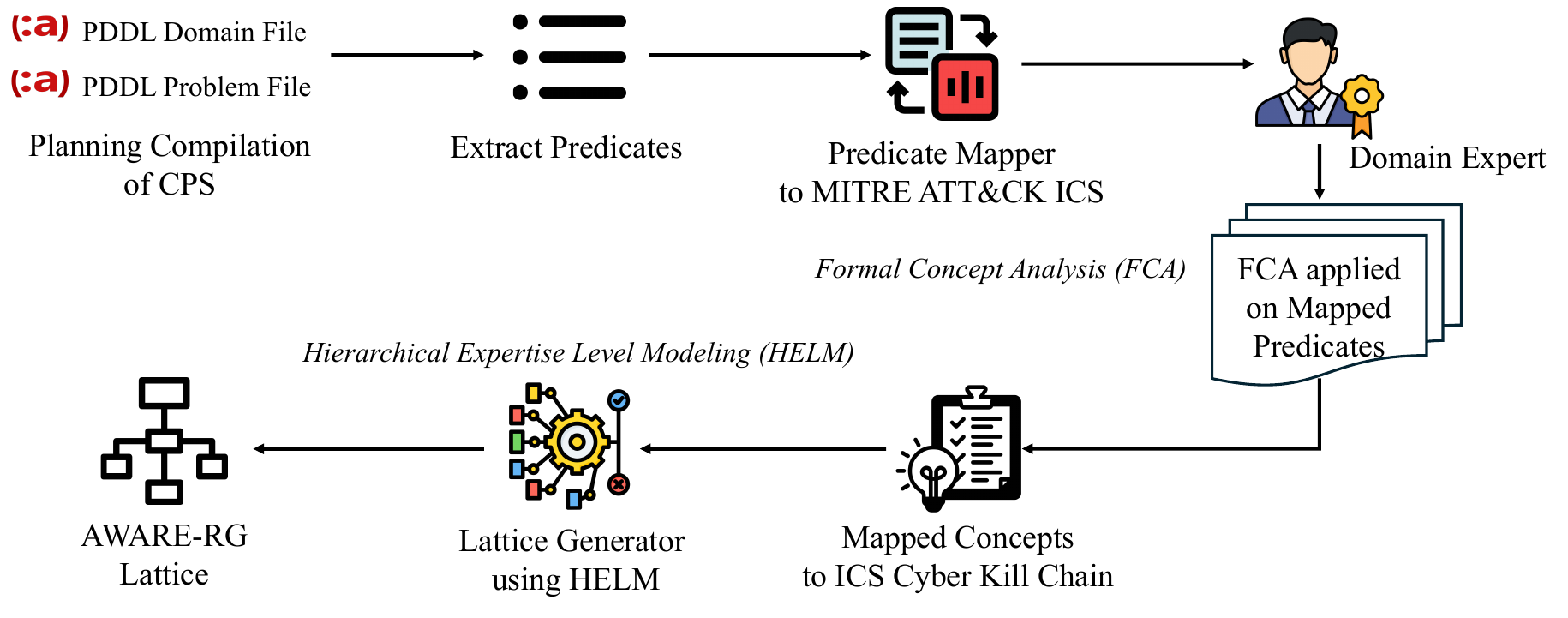}
    \caption{An overview of the Lattice creation pipeline.}
    \label{fig:lattice_pipeline}
\end{figure}

We adapt HELM with three security modifications, shown in \autoref{fig:lattice_pipeline}. We use FCA over MITRE ATT\&CK for ICS tactic assignments to derive security oriented knowledge units, then order them by ICS Cyber Kill Chain stage to preserve the causal progression of cross-domain attack-fault paths. The resulting nodes remain admissible projections of $M^{RG}$; whose soundness is formalized in Lemma~\ref{lem:soundness}.

RG represents cyber attacks and physical faults as a planning model \cite{bashir2024}, from which we define path status:

\begin{defn} \label{def:path-status}
    Let $M^{RG} = \langle F^{RG}, A^{RG}, I^{RG}, G^{RG} \rangle$ be the planning compilation of an RG and $\Pi = \langle a_1, \dots, a_k \rangle$ a grounded action sequence representing an analyst's proposed attack-fault path. $\Pi$ is \emph{valid} in $M^{RG}$ if $\mathit{Valid}(M^{RG}, \Pi) = \top$. $\Pi$ is \emph{repairable} in $M^{RG}$ if it is not valid but there exists a sequence $\Pi'$, obtained from $\Pi$ by inserting actions, with $\mathit{Valid}(M^{RG}, \Pi') = \top$; the preconditions established by the inserted actions constitute the analyst's \emph{missing knowledge}. $\Pi$ is \emph{blocked} in $M^{RG}$ if no such $\Pi'$ exists; the unsatisfied preconditions of the earliest failing action are the \emph{blocking conditions} and serve as hardening pointers.
\end{defn}

AWARE-RG represents analyst knowledge levels via the projection based abstraction of HELM~\cite{sreedharan2018hierarchical}, restated below.

\begin{defn}
\label{model_abs}
%\textbf{Abstract Model}\\
For a planning model $M = \langle F, A, I, G\rangle$ with a corresponding transition system $\mathcal{T}$, a model $M' = \langle F', A', I', G'\rangle$ with a transition system $\mathcal{T}'$ is considered an \textbf{abstraction of $\pmb{M}$}, if there exists a propositions set $P$, such that $F' = F \setminus P$, $I' = I\setminus P$, $G' = G\setminus P$ and for every transition $s_1 \xrightarrow{a} s_2$ in $\mathcal{T}$ corresponding to an action $a$, there exists an equivalent transition  $(s_1 \setminus P) \xrightarrow{a} (s_2 \setminus P)$ in $\mathcal{T}'$. 
\end{defn}
Abstraction induces a partial order over models: $M_1 \preceq M_2$ iff $M_2 = f_P(M_1)$ for some $P \subseteq F$; models higher in the order are more abstract. In AWARE-RG, fluents are not projected out individually but in semantically coherent groups. %Given a partition $\mathcal{K} = \{K_1, \dots, K_m\}$ of $F$, a projection $f_Q$ is \emph{admissible} with respect to $\mathcal{K}$ if $Q$ is a union of blocks of $\mathcal{K}$. The admissible projections of $M$, ordered by $\preceq$, form a bounded-below partial order $\langle \mathcal{M}, \preceq \rangle$ with $M$ as its unique most concrete element; 
Given a partition $\mathcal{K} = \{K_1, \dots, K_m\}$ of $F$, a projection $f_Q$, $\mathcal{M} = \{ f_Q(M) \mid Q \text{ is a union of blocks of } \mathcal{K} \}$ denote the set of all \emph{admissible} projections of $M$. The admissible projections, ordered by $\preceq$, form a
bounded below partial order $\langle \mathcal{M}, \preceq \rangle$ with $M$ as its unique most concrete element (since $f_\emptyset(M) = M$);
we refer to this structure as an \emph{abstraction structure}. The inverse of projection is \emph{concretization}: for an abstract model $M' \in \mathcal{M}$ and $Q \subseteq P$, the concretization $f_Q^{-1}(M')$ satisfies $f_Q(f_Q^{-1}(M')) = M'$, refining $M'$ by reintroducing the fluents in $Q$.

In AWARE-RG, Definition~\ref{model_abs} is instantiated with $M = M^{RG} = \langle F^{RG}, A^{RG}, I^{RG}, G^{RG} \rangle$; all abstraction chain arguments in \S\ref{lattice}--\ref{framework} refer to this instantiation.
We reiterate the following lemma from  Sreedharan et al. \cite{sreedharan2018hierarchical}.
\begin{lemma}[Abstraction soundness]
\label{lem:soundness}
Let $M = \langle F, A, I, G \rangle$ be a planning model and $P \subseteq F$. Every plan $\pi$ valid in $M$ is valid in $f_P(M)$.
\end{lemma}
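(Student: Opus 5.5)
The argument is an induction on the length of the plan, using the state abstraction $\alpha_P(s) = s \setminus P$ induced by the projection $f_P$. Fix a plan $\pi = \langle a_1, \dots, a_k \rangle$ valid in $M$. Let $s_0 = I$ and $s_i = (s_{i-1} \setminus \mathit{del}(a_i)) \cup \mathit{add}(a_i)$ be the concrete state sequence. Write $a_i|_P = \langle \mathit{pre}(a_i)\setminus P, \mathit{add}(a_i)\setminus P, \mathit{del}(a_i)\setminus P\rangle$ for the projected action in $A|_P$. Let $t_0 = I \setminus P$ and let $t_i$ be the states of $f_P(M)$ obtained by applying the $a_i|_P$. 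The invariant to establish is $t_i = \alpha_P(s_i) = s_i \setminus P$ for all $0 \le i \le k$, together with applicability of each $a_i|_P$ in $t_{i-1}$.

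\emph{Base case.} For $i=0$ we have $t_0 = I \setminus P = \alpha_P(s_0)$, by the definition of $f_P(M)$.

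\emph{Inductive step.} Assume $t_{i-1} = s_{i-1} \setminus P$.

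\begin{itemize}
\item \emph{Applicability.} Since $a_i$ is applicable in $s_{i-1}$, we have $\mathit{pre}(a_i) \subseteq s_{i-1}$. Hence $\mathit{pre}(a_i)\setminus P \subseteq s_{i-1}\setminus P = t_{i-1}$, so $a_i|_P$ is applicable in $t_{i-1}$.
\item \emph{Successor state.} We need the set identity
\[
\bigl((s\setminus D)\cup A\bigr)\setminus P = \bigl((s\setminus P)\setminus(D\setminus P)\bigr)\cup (A\setminus P),
\]
with $s = s_{i-1}$, $D = \mathit{del}(a_i)$ and $A = \mathit{add}(a_i)$. To verify it, take an element $x \notin P$: it lies in the left side iff ($x\in s$ and $x\notin D$) or $x\in A$, and this is exactly membership in the right side. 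Elements of $P$ lie in neither side. The identity gives $t_i = s_i \setminus P$.
\end{itemize}

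\emph{Goal.} Since $\pi$ is valid in $M$, $G \subseteq s_k$. Therefore $G\setminus P \subseteq s_k\setminus P = t_k$, so the goal of $f_P(M)$ is satisfied and $\mathit{Valid}(f_P(M),\pi)=\top$.

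The only step needing care is the successor-state identity, since it is where projection must commute with the add/delete semantics. The elementwise check above settles it, and it holds in the order-sensitive form $(s\setminus D)\cup A$ used in the paper's state-update rule. The argument is also consistent with Definition~\ref{model_abs}: it shows that each concrete transition $s_{i-1}\xrightarrow{a_i}s_i$ maps to the abstract transition $(s_{i-1}\setminus P)\xrightarrow{a_i}(s_i\setminus P)$, which is exactly the transition preservation the definition requires.
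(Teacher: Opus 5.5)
Your proof is correct, and it follows the standard argument behind this lemma. The paper does not prove the lemma itself: it imports it from Sreedharan et al.\ and only asserts that $\alpha_P$ maps each transition $\langle s_1,a,s_2\rangle$ to $\langle\alpha_P(s_1),a,\alpha_P(s_2)\rangle$. Your invariant $t_i = s_i\setminus P$, together with your elementwise check of the successor-state identity, supplies exactly that asserted step, and applicability of each projected action and the goal condition then follow as you show.
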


\begin{prop}[Failure monotonicity]
\label{prop:monotonicity}
If $\Pi$ is invalid in $f_P(M)$, then $\Pi$ is invalid in $M$ and in every $f_Q(M)$ with $Q \subseteq P$. Hence, along any chain ordered from most abstract to most concrete, the set of nodes at which $\Pi$ is invalid is downward closed and the first invalid node is well-defined.
\end{prop}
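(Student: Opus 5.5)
The plan is to derive the statement as the contrapositive of Lemma~\ref{lem:soundness}, together with the fact that projections compose. Three ingredients are needed.

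\textbf{Composition of projections.} For $Q \subseteq P$, I will first show that $f_P(M) = f_{P\setminus Q}(f_Q(M))$. This is routine set algebra. On fluents, $(F \setminus Q) \setminus (P \setminus Q) = F \setminus P$, and the same identity holds for $I$ and $G$. Each projected action $\langle \mathit{pre}(a)\setminus Q, \mathit{add}(a)\setminus Q, \mathit{del}(a)\setminus Q\rangle$ becomes, after removing $P\setminus Q$, exactly $\langle \mathit{pre}(a)\setminus P, \mathit{add}(a)\setminus P, \mathit{del}(a)\setminus P\rangle$. Since $P \setminus Q \subseteq F \setminus Q$, the outer projection is well-formed on the model $f_Q(M)$. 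Actions are kept by name, so the same grounded sequence $\Pi$ is interpretable in all three models. The special case $Q=\emptyset$ gives $f_\emptyset(M) = M$.

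\textbf{Main claim.} I will argue by contrapositive. Suppose $\Pi$ is valid in $f_Q(M)$ for some $Q \subseteq P$. Apply Lemma~\ref{lem:soundness} to the planning model $f_Q(M)$ with projected set $P \setminus Q$. This shows $\Pi$ is valid in $f_{P\setminus Q}(f_Q(M))$, which equals $f_P(M)$ by composition. That contradicts the hypothesis, so $\Pi$ is invalid in every $f_Q(M)$ with $Q \subseteq P$. Taking $Q = \emptyset$ covers $M$ itself.

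\textbf{Chain consequence.} Take a chain $f_{P_1}(M), f_{P_2}(M), \dots, f_{P_r}(M)$ ordered from most abstract to most concrete. By the partial order $\preceq$ and composition, this ordering means $P_1 \supseteq P_2 \supseteq \dots \supseteq P_r$. If $\Pi$ is invalid at index $i$, then $P_j \subseteq P_i$ for every $j \ge i$, so the main claim makes $\Pi$ invalid at every later, more concrete node. The invalid set is therefore an up-set of indices, i.e.\ a terminal segment of the chain. This is downward closed in the sense of $\preceq$, where more concrete models sit lower. Because the chain is finite ($F$ is finite), a nonempty terminal segment has a least index. That least index is the well-defined first invalid node.

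\textbf{Main obstacle.} The only delicate point is justifying that $M_1 \preceq M_2$ along the chain really corresponds to nested projected sets, so that the main claim applies. For this I will use the definition $M_2 = f_{P'}(M_1)$ together with composition, which gives $P_2 = P_1 \cup P'$ when each node is written as a projection of $M$. Nestedness then follows directly.
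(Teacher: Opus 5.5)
Your proposal is correct and follows the paper's own argument: the contrapositive of Lemma~\ref{lem:soundness}, applied to $f_Q(M)$ with projected set $P\setminus Q$, together with the composition identity $f_P(M) = f_{P\setminus Q}(f_Q(M))$. Your explicit treatment of the chain consequence, where nested projected sets make the invalid nodes a terminal segment, fills in a step the paper leaves implicit.
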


By contraposition of Lemma~\ref{lem:soundness}, validity in $f_Q(M)$ with $Q \subseteq P$ would imply validity in $f_P(M)$, since $f_P(M) = f_{P \setminus Q}(f_Q(M))$ is a projection of $f_Q(M)$.

\input{62-Predicate-Mapper_NEW}

\input{63-FCA_NEW}
\input{64-Concepts_to_SAN}
\input{65-Lattice-Generation_NEW}

%% file: 62-Predicate-Mapper_NEW.tex
% \subsection{Predicate Mapper}
% \label{sec:predicate-mapper}
\smallskip
\noindent \textbf{Predicate Mapper} \label{sec:predicate-mapper} associates each predicate in the fluent set $F^{RG}$ with MITRE ATT\&CK for ICS tactics, giving each predicate a security interpretation from which subsequent stages derive attack progression oriented knowledge units. Given a concise natural language summary of the CPS operational context covering asset inventory, network topology, and the cyber-to-physical fault propagation, the predicate set $P \subseteq F^{RG}$, and the fixed tactic set $\mathbb{T}$, an LLM produces a tactic assignment $\tau : P \to 2^{\mathbb{T}}$, mapping each predicate to a tactic with justification. Predicates sharing a tactic assignment are grouped into a single knowledge unit. Rather than consuming a single LLM pass, the domain expert samples multiple outputs and selects the assignment that best reflects each predicate's causal role in the attack-fault chain rather than its syntactic similarity to a tactic name, preventing a hallucinated assignment from placing a knowledge unit at the wrong propagation stage. Table~\ref{tab:predicate-fca} shows the mappings for the running example.

%% file: 63-FCA_NEW.tex
\smallskip
\noindent \textbf{Formal Concept Analysis (FCA)} \label{fca} groups the mapped predicates into coherent knowledge units. For AWARE-RG, FCA operates over a formal context $\kappa = (P,\, \mathbb{T}, \varphi)$, where the predicate set $P \subseteq F^{RG}$ is the set of objects, the tactic set $\mathbb{T}$ the set of attributes, and the incidence relation $\varphi \subseteq P \times \mathbb{T}$ is defined by $(p, t) \in \varphi$ iff $t \in \tau(p)$. Predicates sharing the same tactic assignment form a single \emph{knowledge unit} and are revealed or removed together as the abstraction structure transitions between levels.
This reflects how analyst knowledge is organized in practice: an IT analyst's boundary is better represented by whether they understand the Lateral Movement stage than by whether they know each CVE predicate within it, so knowledge units capture analyst knowledge at the level of attack propagation concepts rather than isolated predicates.

% FCA groups the mapped predicates into coherent knowledge units. For AWARE-RG, FCA operates over a formal context $\kappa = (P,\, \mathbb{T}, \varphi)$, where the predicate set $P \subseteq F^{RG}$ is the set of objects, the tactic set $\mathbb{T}$ is the set of attributes, and the incidence relation $\varphi \subseteq P \times \mathbb{T}$ is defined by $(p, t) \in \varphi$ iff $ t \in \tau(p)$, i.e., predicate $p$ is incident to tactic $t$ if and only if $t$ appears in the tactic assignment produced by the Predicate Mapper. Predicates that share the same tactic assignment are treated as a single \emph{knowledge unit} and are revealed or removed together as the abstraction structure transitions between levels. This grouping reflects how a security analyst's knowledge is organized in practice: an IT analyst's knowledge boundary is better represented by whether they understand the Lateral Movement stage than by whether they know each individual CVE predicate within that stage. The FCA derived knowledge units therefore capture analyst knowledge at the level of attack propagation concepts rather than isolated predicates.

These knowledge units define the partition $\mathcal{K} = \{K_1, \dots, K_m\}$ of $F^{RG}$:
% introduced in \S\ref{background}: 
each block $K_i$ is the set of predicates sharing a common tactic assignment, and all admissible projections in the AWARE-RG abstraction structure remove unions of these blocks. Hence no abstraction node can expose a semantically incoherent predicate set, such as a state in which one CVE's exploitability is visible while another CVE serving the same role in attack propagation is removed. In Table~\ref{tab:predicate-fca}, \path{vul-CVE-2019-0575} and \path{vul-CVE-2019-0584} share \textit{Execution} as their tactic, forming knowledge unit $K_3$, and are revealed or removed together at every abstraction.

\begin{table}[t]
    \centering
    \caption{Predicate mappings and FCA knowledge units in the running example. Predicates with identical tactic assignments belong to the same knowledge unit.}
    \label{tab:predicate-fca}
    \scriptsize
    \setlength{\tabcolsep}{3pt}
    \begin{tabularx}{\columnwidth}{
        >{\raggedright\arraybackslash}X
        >{\raggedright\arraybackslash}p{0.37\columnwidth}
        % >{\raggedright\arraybackslash}p{0.17\columnwidth}
        c}
        \toprule
        \textbf{Predicate} &
        \textbf{Tactic} &
        % \textbf{Secondary tactic} &
        \textbf{Unit} \\
        \midrule

        \path{compromised-customer-pc}
        & Initial Access
        % & --
        & $K_1$ \\

        \path{access-to-windows-server}
        & Lateral Movement
        % & --
        & $K_2$ \\

        \path{vul-CVE-2019-0575}
        & Execution
        % & Discovery
        & $K_3$ \\

        \path{vul-CVE-2019-0584}
        & Execution
        % & --
        & $K_3$ \\

        \bottomrule
    \end{tabularx}
\end{table}

%% file: 64-Concepts_to_SAN.tex
% \subsection{ICS Cyber Kill Chain Mapper}\label{sans_mapper}

\smallskip
\noindent \textbf{ICS Cyber Kill Chain Mapper} \label{sans_mapper}
organizes FCA derived knowledge units according to the progression of attacks from IT to OT. \textit{Stage 1} covers IT activities such as intrusion, lateral movement, and ICS attack preparation; \textit{Stage 2} covers OT activities such as ICS execution, process control impairment, and physical consequence, aligning the causal structure of cross-domain attacks with the organizational knowledge boundaries.

Table~\ref{tab:killchain} provides the mapping of MITRE ATT\&CK for ICS tactics to ICS Cyber Kill Chain stages ordered $S_1$ to $S_{12}$. Each knowledge unit $K_i \in \mathcal{K}$ is assigned a stage index $S_i$, inducing a total order over $\mathcal{K}$ that governs which blocks are removed first as abstraction increases. It produces two base nodes: an IT base node retaining Stage 1 concepts and an OT base node retaining Stage 2 concepts, which serve as the roots of the two lattice branches. 
% Within each branch, knowledge units are removed in reverse ICS Cyber Kill Chain order as abstraction increases, .
As abstraction increases, the IT branch removes knowledge units in reverse kill chain order, while the OT branch removes the earliest stages first, so that every node preserves the causal progression of the attack-fault path.

\begin{table}[t]
\centering
\caption{Two stage mapping of MITRE ATT\&CK for ICS tactics to the ICS Cyber Kill Chain. Rows marked $\star$ are instantiated in the running example. \textbf{SEDE}: Sustainment, Entrenchment, Development \&
Execution.}
\label{tab:killchain}
\scriptsize
\setlength{\tabcolsep}{1.5pt}
\resizebox{\columnwidth}{!}{%
\begin{tabularx}{\columnwidth}{
    @{}c|
    >{\raggedright\arraybackslash}p{0.3\columnwidth}
    >{\raggedright\arraybackslash}X
    cc@{}}
\toprule
\textbf{Stage} & \textbf{MITRE ATT\&CK for ICS Tactic}
& \textbf{ICS Cyber Kill Chain Stage} & \textbf{Order} & \textbf{Ex.} \\
\midrule
\multirow{9}{*}{\rotatebox{90}{Stage 1 (IT)}}
& Initial Access        & Cyber Intrusion & $S_1$ & $\star$ \\
& Execution             & Cyber Intrusion & $S_2$ & $\star$ \\
& Persistence           & SEDE & $S_3$ & \\
& Privilege Escalation  & SEDE & $S_4$ & \\
& Command and Control   & Management \& Enablement & $S_5$ & \\
& Discovery             & SEDE & $S_6$ & $\star$ \\
& Lateral Movement      & SEDE & $S_7$ & $\star$ \\
& Collection            & SEDE & $S_8$ & \\
& Evasion               & SEDE & $S_9$ & \\
\midrule
\multirow{5}{*}{\rotatebox{90}{Stage 2 (OT)}}
& Execution                 & ICS Attack & $S_2$ & $\star$ \\
& Lateral Movement          & ICS Attack & $S_7$ & $\star$ \\
& Inhibit Response Function & ICS Attack & $S_{10}$ &  \\
& Impair Process Control    & ICS Attack & $S_{11}$ & $\star$ \\
& Impact & Physical / Operational Consequence & $S_{12}$ & $\star$ \\
\bottomrule
\end{tabularx}}
\end{table}

%% file: 65-Lattice-Generation_NEW.tex
% \subsection{Lattice Generator}
% \label{lattice_generator}

\smallskip
\noindent \textbf{Lattice Generator} \label{lattice_generator}
organizes the ICS Cyber Kill Chain knowledge units into the AWARE-RG lattice using HELM. Each node corresponds to an admissible projection $f_Q(M^{RG})$, where $Q$ is a union of blocks from $\mathcal{K}$ ordered by the stage indices $S_i$, and nodes are ordered by the abstraction relation $M_1 \preceq M_2$ where $M_2 = f_P(M_1)$ for some $P$. By Lemma~\ref{lem:soundness}, every plan valid in $M^{RG}$ remains valid in each abstracted node, so plan validity is preserved throughout the lattice by construction. We denote the resulting structure as the \emph{AWARE-RG lattice} $\mathbb{L} = \langle \mathcal{M}, \preceq \rangle$, where $\mathcal{M} = \{ f_Q(M) \mid Q \text{ is a union of blocks of } \mathcal{K} \}$.

% The Lattice Generator organizes the ICS Cyber Kill Chain knowledge units into the AWARE-RG lattice using HELM. Each node corresponds to an admissible projection $f_Q(M^{RG})$, where $Q$ is a union of
% blocks from $\mathcal{K}$ ordered by the stage indices $S_i$ assigned in Section~\ref{sans_mapper}. Nodes are ordered by the abstraction relation $M_1 \preceq M_2$ where $M_2 = f_P(M_1)$ for some $P$; by Lemma~\ref{lem:soundness}, every plan valid in $M^{RG}$ remains valid in each abstracted node, so plan validity is preserved throughout the lattice by construction. We denote the complete constructed structure as the \emph{AWARE-RG lattice} $\mathbb{L} = \langle \mathcal{M}, \preceq \rangle$, where $\mathcal{M} = \{ f_Q(M) \mid Q \text{ is a union of blocks of } \mathcal{K} \}$.

\textbf{IT and OT Branches} From the base nodes, the generator constructs each branch by progressively projecting out knowledge units: the IT branch produces $\text{MIT}_0$--$\text{MIT}_3$ by removing $S_7$ through $S_1$, and the OT branch produces $\text{MOT}_0$--$\text{MOT}_2$ by removing $S_{11}$ through $S_7$. The impact concept $S_{12}$ is retained at every node, grounding each \textit{need-to-know} explanation in the physical hazard associated with the analyst's query. An \emph{authorized view} $V \in \{\text{IT}, \text{OT}\}$, determined at authentication, restricts each analyst to one branch and its corresponding bridge nodes. Figure~\ref{fig:lattice} shows the resulting structure.

\begin{figure}[t]
\centering
\resizebox{\columnwidth}{!}{%
\begin{tikzpicture}[
  nd/.style={draw, rounded corners=2pt, align=center, font=\tiny,
             inner sep=2pt, minimum width=19mm},
  it/.style={nd, fill=blue!8},
  it-bridge/.style={nd, fill=green!8},
  ot/.style={nd, fill=magenta!8},
  ot-bridge/.style={nd, fill=cyan!8},
  full/.style={nd, fill=black!5},
  ed/.style={->,>=stealth, gray!70},
  br/.style={->,>=stealth, gray!45, dashed},
  lb/.style={font=\tiny, inner sep=1pt}
]
% ---- branches ----
\node[it]   (mit0) at (0,0)      {MIT0\\(S1,S2,S6,S7,S12)};
\node[it]   (mit1) at (0,1.25)   {MIT1\\(S1,S2,S6,S12)};
\node[it]   (mit2) at (0,2.5)    {MIT2\\(S1,S2,S12)};
\node[it]   (mit3) at (0,3.75)   {MIT3\\(S1,S12)};
\node[ot]   (mot0) at (3.4,0)    {MOT0\\(S2,S7,S11,S12)};
\node[ot]   (mot1) at (3.4,1.25) {MOT1\\(S7,S11,S12)};
\node[ot]   (mot2) at (3.4,2.5)  {MOT2\\(S11,S12)};
\node[full] (full) at (1.7,-1.4) {$M^{\mathrm{RG}}$\\(S1,S2,S6,S7,S11,S12)};
% ---- bridge fans ----
\node[it-bridge] (b1) at (-2.9,2.9) {MIT3-MOT0\\(S1,S12,C2,C7,C11)};
\node[it-bridge] (b2) at (-2.9,3.9) {MIT3-MOT1\\(S1,S12,C7,C11)};
\node[it-bridge] (b3) at (-2.9,4.9) {MIT3-MOT2\\(S1,S12,C11)};
\node[ot-bridge] (b4) at (6.3,1.6) {MOT2-MIT0\\(S11,S12,C1,C2,C6,C7)};
\node[ot-bridge] (b5) at (6.3,2.6) {MOT2-MIT1\\(S11,S12,C1,C2,C6)};
\node[ot-bridge] (b6) at (6.3,3.6) {MOT2-MIT2\\(S11,S12,C1,C2)};
\node[ot-bridge] (b7) at (6.3,4.6) {MOT2-MIT3\\(S11,S12,C1)};
% ---- edges ----
\draw[ed] (full) -- node[lb,left]  {S11}   (mit0);
\draw[ed] (full) -- node[lb,right] {S1,S6} (mot0);
\draw[ed] (mit0) -- node[lb,left] {S7} (mit1);
\draw[ed] (mit1) -- node[lb,left] {S6} (mit2);
\draw[ed] (mit2) -- node[lb,left] {S2} (mit3);
\draw[ed] (mot0) -- node[lb,right] {S2} (mot1);
\draw[ed] (mot1) -- node[lb,right] {S7} (mot2);
\foreach \b in {b1,b2,b3} \draw[br] (mit3) -- (\b);
\foreach \b in {b4,b5,b6,b7} \draw[br] (mot2) -- (\b);
\end{tikzpicture}}
\caption{AWARE-RG lattice for the running example, showing both branches and two representative bridge families; the rest follow by the same construction. Edge labels show which knowledge unit was removed. $S$ concepts denote low-level system knowledge, $C$ concepts the corresponding high-level abstractions used in bridge nodes.}
\label{fig:lattice}
\end{figure}

% We reiterate the following lemma from  Sreedharan et al. ~\cite{sreedharan2018hierarchical}, which also gives the proof.
% \begin{lemma}[Abstraction soundness]
% \label{lem:soundness}
% Let $M = \langle F, A, I, G \rangle$ be a planning model and $P \subseteq F$. Every plan $\pi$ valid in $M$ is valid in $f_P(M)$.
% \end{lemma}

% \begin{prop}[Failure monotonicity]
% \label{prop:monotonicity}
% If $\Pi$ is invalid in $f_P(M)$, then $\Pi$ is invalid in $M$ and in every $f_Q(M)$ with $Q \subseteq P$. Hence, along any chain ordered from most abstract to most concrete, the set of nodes at which $\Pi$ is invalid is downward closed and the first invalid node is well-defined.
% \end{prop}
% % \begin{proof}
% By contraposition of Lemma~\ref{lem:soundness}, validity in $f_Q(M)$ with $Q \subseteq P$ would imply validity in $f_P(M)$, since $f_P(M) = f_{P \setminus Q}(f_Q(M))$ is a projection of $f_Q(M)$.
% % \end{proof}

% In AWARE-RG, Lemma~\ref{lem:soundness} and Proposition~\ref{prop:monotonicity} are instantiated with
% $M = M^{RG} = \langle F^{RG}, A^{RG}, I^{RG}, G^{RG} \rangle$, the planning compilation of the Resiliency Graph, and $F = F^{RG}$; all abstraction-chain arguments in Sections~\ref{lattice}--\ref{framework} refer to this instantiation.

\smallskip
\textbf{Bridge Nodes}
extend a primary branch node with the minimum high-level concept from the opposite domain required to support cross-domain \textit{need-to-know} explanation, using the $S_i \mapsto C_i$ mapping in Table~\ref{tab:low_high_bridge_mapping} Appendix C. For example, $\text{MIT}_3\text{-}\text{MOT}_2$ retains Initial Access ($S_1$) and Impact ($S_{12}$) from $\text{MIT}_3$ and adds only $C_{11}$, \path{has_impair_process_control}. The IT analyst thus learns that the attack impairs process control leads to a physical consequence, without seeing the $S_{11}$ fault propagation predicates, such as valve states, that lie outside their authorized view. Bridge nodes are indexed by analyst position; positions whose minimum disclosure coincides map to the same underlying model.

\begin{property}[Bridge Node Minimum Disclosure]
\label{prop:ntk}
For an IT node $\text{MIT}_k$ and OT node $\text{MOT}_j$: $\mathit{Pred}(\text{MIT}_k\text{-}\text{MOT}_j) = \mathit{Pred}(\text{MIT}_k) \cup \{C_i \mid S_i \in \mathit{Pred}(\text{MOT}_j) \setminus \mathit{Pred}(\text{MIT}_k)\}$, and $\mathit{Pred}(\text{MOT}_k\text{-}\text{MIT}_j) = \mathit{Pred}(\text{MOT}_k) \cup \{C_i \mid S_i \in \mathit{Pred}(\text{MIT}_j) \setminus \mathit{Pred}(\text{MOT}_k)\}$. In particular, no $S_i$ predicate from the opposite domain appears in any bridge node, and no $C_i$ is disclosed for a stage the analyst already sees at $S_i$.
\end{property}

\begin{proof}[Proof Sketch]
For each stage $S_i$ visible in the opposite branch node but not in the primary node, the bridge node generator adds only the abstraction predicate $C_i$ from Table~\ref{tab:low_high_bridge_mapping} (Appendix~\ref{app:evaluation}) to the primary node's predicate set. Since no rule introduces an opposite domain $S_i$ predicate, none can appear; and since the set difference excludes stages the analyst already holds at the $S_i$ level, no $C_i$ is added for such a stage.
% Both claims hold by construction.
\end{proof}

Property~\ref{prop:ntk} enforces \textit{need-to-know} and \textit{separation-of-duty} structurally in the bridge nodes: no opposite domain $S_i$ predicate can appear in any bridge node, and each node carries only the $C_i$ concepts corresponding to stages visible in the opposite branch, giving sufficient cross-domain context without excess disclosure. Multiple bridge nodes are required because the minimum disclosure depends on the analyst's position in the primary branch: in Figure~\ref{fig:lattice}, $\text{MOT}_2\text{-}\text{MIT}_3$ and $\text{MOT}_2\text{-}\text{MIT}_0$ pair the same OT node with different IT summaries. $M^{RG}$ itself is used only for the framework's internal
reasoning.
% and never as an explanation node.

% Property~\ref{prop:ntk} guarantees minimum disclosure for both IT-to-OT and OT-to-IT bridge nodes. \textit{Need-to-know}: each security analyst receives only the minimum cross-domain information
% required for the explanation; no low-level $S_i$ predicate from the opposite domain is exposed. \textit{Separation-of-duty}: IT and OT operational knowledge remain structurally separated because opposite domain low-level predicates are excluded by construction. \textit{Explanation utility}: each bridge node includes only the $C_i$ concepts corresponding to the visible $S_i$ stages of the opposite branch, providing sufficient cross-domain context without excess disclosure. These guarantees hold by construction.

% \medskip
% \noindent\textbf{AWARE-RG Lattice:}
% Figure~\ref{fig:lattice} shows the lattice for the running example. It includes the IT branch ($\text{MIT}_0$--$\text{MIT}_3$), the OT branch ($\text{MOT}_0$--$\text{MOT}_2$), bridge nodes in both
% IT-to-OT and OT-to-IT directions, and $M^{RG}$, which is used only for the framework's internal reasoning. Multiple bridge nodes are needed because minimum cross-domain disclosure depends on the
% analyst's position in the primary branch: for example, $\text{MIT}_3$ and $\text{MIT}_0$ require different OT summaries. Each bridge node captures a specific IT/OT abstraction pairing, ensuring
% Property~\ref{prop:ntk} holds for every reachable analyst position. This compact, causally coherent lattice forms the structural backbone of AWARE-RG analysis.

%% file: 41-AWARE_RG_New-SandP.tex
The Figure~\ref{fig:AWARE-RG_pipeline} illustrates the overall architecture. Following authentication, the analyst's IT or OT role determines the accessible branch of the AWARE-RG lattice and the bridge node summaries that provide cross-domain context without exposing $M^{RG}$. The analyst submits a natural language query describing an expected attack-fault path and its anticipated outcome; AWARE-RG parses it, grounds the described steps into actions from $M^{RG}$, and performs either Path Validity or Path Invalidity Analysis depending on the inferred intent. The resulting traces are transformed into \textit{need-to-know} explanations, security assessments, and domain bounded defensive guidance, with each component detailed below.
\smallskip
% The overall architecture of AWARE-RG is illustrated in Figure~\ref{fig:AWARE-RG_pipeline}. Following authentication, the security analyst's IT or OT role determines the accessible branch of the AWARE-RG lattice and the corresponding bridge node summaries used to provide cross-domain context without exposing $M^{RG}$. The analyst then submits a natural language query describing an expected attack-fault path and its anticipated outcome. AWARE-RG parses the query, grounds the described steps into actions from $M^{RG}$, and performs either Path Validity Analysis or Path Invalidity Analysis depending on the inferred query intent. The resulting analysis traces are transformed into \textit{need-to-know} explanations, security assessments, and domain bounded defensive guidance. Each component is provided in the following subsections.

\input{42-Candidate-Action-Mapper_New}

\input{43-Valid-Plan_NEW}
\input{44-Invalid-Plan_NEW}
\input{45-Explanation_NEW}

%% file: 42-Candidate-Action-Mapper_New.tex
% \subsection{Query and Candidate Action Mapping}
% \label{plan_fragment_mapper}
\noindent \textbf{Query and Candidate Action Mapping} \label{plan_fragment_mapper}
translates an analyst's natural language description of an expected attack-fault path into a candidate action sequence $\Pi$ that can be evaluated in $M^{RG}$. An analyst query is represented as $Query=\langle \mathit{intent},\mathit{goal},seq\rangle,$
where $\mathit{intent}$ indicates whether the analyst is asking why an attack-fault path exists or does not exist, $\mathit{goal}\subseteq F^{RG}$ is the expected physical outcome, and $seq=\langle s_1,\ldots,s_n\rangle$ is the sequence of natural language descriptions of the proposed path steps. For example, the step ``Exploit CVE-2018-0296 on the Windows Server to gain access to the internal network'' will be mapped to the grounded action \texttt{(exploits-cve-2018-0296 windows-server internal-network)}.
Algorithm~\ref{alg:step-mapping-verification} performs this mapping and verification. For each step $s_i$, an LLM selects a corresponding action schema from $A^{RG}$ and binds its arguments to objects in $F^{RG}$. Structural verification then checks that the action exists, its arity matches the action schema, and all referenced objects are valid in $I^{RG}$. A mapping that fails any of these checks is returned to the analyst for correction; otherwise, the action is appended to $\Pi$.

Structural verification prevents malformed actions from entering the analysis pipeline, but it does not guarantee semantic correctness. LLM can still select a structurally valid action associated with the wrong asset or vulnerability, which can produce a misdirected explanation ($T_1$, \S\ref{sec:threat-model}). The current implementation, therefore, requires review by a trusted domain expert, distinct from the querying analyst. Automating semantic consistency checks between the natural language step and the grounded action remains future work discussed in \S\ref{conclusion}.
\smallskip
% The Candidate Action Mapper converts the security analyst's natural language description of an expected attack-fault path into a candidate action sequence $\Pi$ over $A^{RG}$. Natural language
% steps cannot be directly analyzed by $M^{RG}$; the mapper bridges this gap by grounding each described step to a concrete action in $A^{RG}$ bound to objects from $F^{RG}$. An analyst query
% $Q = \langle \mathit{intent},\, \mathit{goal},\, seq \rangle$ consists of a query intent (existence or non-existence of an attack-fault path), an expected physical goal state $\mathit{goal} \subseteq
% F^{RG}$, and a step sequence $seq = \langle s_1, \dots, s_n \rangle$ of natural language descriptions of the proposed path. For example, the step $s_i$ given as ``Exploit CVE-2018-0296 on the Windows
% Server to gain access to the internal network'' is mapped to the grounded action \texttt{(exploits-cve-2018-0296 windows-server internal-network)}.

\begin{algorithm}[!h]
\caption{Analyst Query Step Mapping and Verification}
\label{alg:step-mapping-verification}
\scriptsize
\textbf{Input:}  Analyst query $Query$,
                 planning compilation $M^{RG}$ \\
\textbf{Output:} Candidate action sequence $\Pi$,
                 or error set $\mathit{Errors}$
\begin{algorithmic}[1]
\State $\Pi \gets \langle\rangle$;\;
       $\mathit{Errors} \gets \emptyset$
\ForAll{$s_i \in seq$}
    \State $a_i \gets \mathrm{LLM}(Query,\, s_i,\, A^{RG},\, F^{RG})$
           \Comment{propose action and arguments}
    \State $a_i^g \gets \mathrm{Ground}(a_i,\, I^{RG})$
           \Comment{bind variables to objects in $I^{RG}$}
    \State $e_i \gets \mathrm{VerifyStructure}(a_i^g,\, A^{RG},\,
           F^{RG})$
           \Comment{Structural verification}
    \If{$e_i \neq \emptyset$}
        \State $\mathit{Errors} \gets \mathit{Errors} \cup
               \{\langle a_i^g,\, e_i \rangle\}$
    \Else
        \State $\Pi \gets \Pi \cdot \langle a_i^g \rangle$
    \EndIf
\EndFor
\If{$\mathit{Errors} \neq \emptyset$}
    \State \Return $\mathit{Errors}$
\EndIf
\State \Return $\Pi$
\end{algorithmic}
\end{algorithm}

%% file: 43-Valid-Plan_NEW.tex
\noindent \textbf{Path Validity Analysis} \label{valid_plan} determines whether a candidate action sequence $\Pi$ can trigger the analyst's expected hazard
$g=Query.\mathit{goal}$ in $M^{RG}$. AWARE-RG first checks whether $\Pi$ is valid in $M^{RG}$; if $\mathit{Valid}(M^{RG},\Pi)=\top$, $\Pi$ is used directly. Otherwise, plan repair insert actions while preserving the order of the analyst provided steps (Definition~\ref{def:path-status}). A successful repair produces $\Pi'$ such that $\mathit{Valid}(M^{RG},\Pi')=\top$; the inserted actions and
the predicates they establish identify dependencies missing from the analyst’s knowledge from the proposed path. If no repair is found, the analysis returns \textsc{RepairFailed}.
Each lattice node $n\in\mathbb{L}$ is an admissible projection $f_P(M^{RG})$, where $P$ is a union of knowledge units from $\mathcal{K}$. The predicates visible at $n$ are therefore $\mathit{Pred}(n)=F^{RG}\setminus P$. Each node is both a knowledge state and a planning model in which the $\mathit{Valid}(n, \Pi)$ sequence can be analyzed.

% Path Validity Analysis determines whether a security analyst's proposed attack-fault path can trigger an expected physical fault or hazard. Given the candidate action sequence $\Pi$ produced by
% Algorithm~\ref{alg:step-mapping-verification} and the analyst query $Q = \langle \mathit{intent}, \mathit{goal}, seq \rangle$, AWARE-RG first checks whether $\Pi$ is valid in $M^{RG}$. If $\mathit{Valid}(M^{RG}, \Pi) = \top$, the analysis proceeds using $\Pi$ directly. If $\Pi$ is repairable in $M^{RG}$ (Definition~\ref{def:path-status}), the system performs plan repair:
% actions are inserted into $\Pi$ at minimum cost until the resulting path $\Pi'$ satisfies $\mathit{Valid}(M^{RG}, \Pi') = \top$; the preconditions established by the inserted actions constitute the analyst's missing knowledge. If $\Pi$ is blocked, the path is returned as invalid. Please note that, Each node $n \in \mathbb{L}$ is an admissible projection
% $f_Q(M^{RG})$ of the planning compilation, where $Q$ is a union of blocks from $\mathcal{K}$; it represents the subset of $M^{RG}$ visible to a security analyst whose knowledge excludes the predicates in $Q$. A node $n$ is therefore both a knowledge state and a planning model: $\mathit{Valid}(n, \Pi)$ is well defined for any candidate sequence $\Pi$, and $\mathit{Pred}(n) = F^{RG} \setminus Q$ is the set of predicates visible at that knowledge level.

\begin{algorithm}[!h]
\caption{Path Validity Analysis}
\label{alg:valid-path-mea}
\scriptsize
\textbf{Input:}  Candidate sequence $\Pi$,
                 analyst query $Query = \langle \mathit{intent},
                 \mathit{goal}, seq \rangle$,
                 AWARE-RG lattice $\mathbb{L}$,
                 analyst view $V$,
                 planning model $M^{RG}$,
                 maximum repair rounds $R$ \\
\textbf{Output:} Validity result $\mathit{VR}$,
                 MEA trace $T_v$,
                 repair report $\mathit{RR}$
\begin{algorithmic}[1]
\State $\mathit{VR} \gets \langle\rangle$;\;
       $T_v \gets \langle\rangle$; $\mathit{RR}\gets\emptyset$
       
\If{$\mathit{Valid}(M^{RG},\Pi)=\top$}
    \State $\Pi'\gets\Pi$
\Else
    \State $(\Pi',\mathit{RR})\gets \mathrm{Repair}(\Pi,\, Query.\mathit{goal},\, M^{RG},\, R)$
       % \Comment{minimal cost action}
    \If{$\Pi'=\langle\rangle$}
        \State \Return
        $\langle\textsc{RepairFailed},T_v,\mathit{RR}\rangle$
    \EndIf
\EndIf

\State $\mathit{Chain}\gets
       \mathrm{AuthorizedChain}(\mathbb{L},V)
       \setminus\{M^{RG}\}$
       % \Comment{most abstract to most concrete}

% \State $n_{\mathit{start}} \gets
%        \min_{\preceq}\{ n \in \mathbb{L} \mid
%        V \text{ authorizes } n \}$
%        \Comment{most abstract authorized node} 
%        % in $\mathbb{L}$}
% \State $\mathit{Chain} \gets
%        \{ n \in \mathbb{L} \mid n_{\mathit{start}} \preceq n \}
%        \setminus \{ M^{RG} \}$
%        \Comment{$M^{RG}$ used only for repair, not explanation search}
\ForAll{$n_i \in \mathit{Chain}$ in increasing concreteness}
    \State $(v_i,\, T_i,\, U_i) \gets
           \mathrm{MEA}(\Pi',\, Query.\mathit{goal},\, n_i)$
           \Comment{backward MEA at $n_i$}
    \State $\mathit{VR} \gets \mathit{VR} \cdot
           \langle n_i,\, v_i,\, U_i \rangle$
    \State $T_v \gets T_v \cdot \langle n_i,\, T_i \rangle$
\EndFor
\State $n_{\mathit{sel}} \gets \mathrm{SelectNode}(\mathit{VR},\,
       T_v,\, \mathbb{L},\, V)$
       \Comment{first authorized node}
\State $\mathit{VR} \gets \mathrm{Project}(\mathit{VR},\,
       n_{\mathit{sel}})$
\State $\mathit{RR} \gets \mathrm{Project}(\mathit{RR},\,
       n_{\mathit{sel}})$ 
\State $T_v \gets \mathrm{Project}(T_v,\, n_{\mathit{sel}})$
\State \Return $\langle \mathit{VR},\, T_v,\, \mathit{RR} \rangle$
\end{algorithmic}
\end{algorithm}

After obtaining a valid or repaired path, AWARE-RG performs backward means-end analysis over the authorized bridge chain, ordered from the most abstract node to the most concrete. MEA regresses from the expected hazard $ Query.\mathit{goal}$ and identifies the actions, effects, and preconditions that support it. $M^{RG}$ is excluded from explanation node search and used only for validity checking and repair. The selected explanation node $n_{\mathit{sel}}$ is the first authorized node containing sufficient grounded preconditions and supporting effects to explain why the hazard is reachable. The $\mathrm{Project}(\cdot,\, n_{\mathit{sel}})$ steps then filter all returned artifacts to $\mathit{Pred}(n_{\mathit{sel}})$, which leads us to define Property~\ref{property:output-confinement} of AWARE-RG.

\begin{property}[Symbolic Output Confinement]
\label{property:output-confinement}
For every query from a security analyst with view $V$, every predicate in the returned explanation, validity trace, repair report, and diagnosis report belongs to $\mathit{Pred}(n^{\mathit{*}})$, where $n^{\mathit{*}}$ is the authorized node $n_{\mathit{sel}}$ (valid path) or $n_{\mathit{fail}}$ (invalid path).
\end{property}

\begin{proof}[Proof Sketch]
Immediate from the $\mathrm{Project} (\cdot,\, n^{\mathit{*}})$
filtering steps in Algorithms \ref{alg:valid-path-mea} -- \ref{alg:explanation-generator}, with Property \ref{prop:ntk}, which guarantees $\mathit{Pred}(n^{\mathit{*}})$ contains no opposite domain $S_i$ predicate. Property 2 concerns the predicate symbols occurring in an output, not the information a natural language sentence conveys.
\end{proof}

\smallskip

%% file: 44-Invalid-Plan_NEW.tex
% \subsection{Path Invalidity Analysis}
% \label{invalid_plan}

\noindent \textbf{Path Invalidity Analysis}
\label{invalid_plan}
determines which defensive barriers prevent a security analyst's proposed attack-fault path from triggering an expected physical fault. Given the candidate action sequence $\Pi$ and analyst query $Q = \langle \mathit{intent}, \mathit{goal}, seq \rangle$, AWARE-RG proceeds to diagnose where the path first fails under the analyst's authorized abstraction view. By Proposition~\ref{prop:monotonicity}, the set of nodes at which $\Pi$ is invalid is downward closed along the chain, so the first invalid node is  marks the abstraction boundary at which the analyst's mental model diverges from $M^{RG}$.

\begin{algorithm}[!h]
\caption{Path Invalidity Analysis}
\label{alg:plan-invalidity-diagnosis}
\scriptsize
\textbf{Input:}  Candidate sequence $\Pi$,
                 AWARE-RG lattice $\mathbb{L}$,
                 analyst view $V$,
                 planning compilation $M^{RG}$ \\
\textbf{Output:} Invalidity trace $\mathit{T_i}$,
                 diagnosis trace $D$
\begin{algorithmic}[1]
\State $\mathit{T_i} \gets \langle\rangle$;\;
       $D \gets \langle\rangle$;\;
       $n_{\mathit{fail}} \gets \emptyset$

% \State $n_{\mathit{start}} \gets
%        \min_{\preceq}\{ n \in \mathbb{L} \mid
%        V \text{ authorizes } n \}$
% \State $\mathit{Chain} \gets
%        \{ n \in \mathbb{L} \mid n_{\mathit{start}} \preceq n \}
%        \setminus \{ M^{RG} \}$
\State $\mathit{Chain}\gets
       \mathrm{AuthorizedChain}(\mathbb{L},V)
       \setminus\{M^{RG}\}$
\ForAll{$n_i \in \mathit{Chain}$ in increasing concreteness}
    \State $v_i \gets \mathit{Valid}(n_i,\, \Pi)$
    \State $\mathit{T_i} \gets \mathit{T_i} \cdot
           \langle n_i,\, v_i \rangle$
    \If{$v_i \neq \top$}
        \State $n_{\mathit{fail}} \gets n_i$;\;
              \State \textbf{break}
    \EndIf
\EndFor
\If{$n_{\mathit{fail}} = \emptyset$}
    \State \Return $\langle \mathit{T_i},\, D \rangle$
\EndIf
\State $D \gets \mathrm{Diagnose}(\Pi,\, n_{\mathit{fail}})$
       \Comment{stepwise precondition check at $n_{\mathit{fail}}$}
\State $D \gets \mathrm{Confirm}(D,\, M^{RG})$
       \Comment{retain only predicates unsatisfied in $M^{RG}$}
\State \Return $\langle \mathit{T_i},\, D \rangle$
\end{algorithmic}
\end{algorithm}

The algorithm traverses the analyst authorized chain from most abstract to most concrete, excluding $M^{RG}$ from explanation node search. At the first node $n_{\mathit{fail}}$ where $\mathit{Valid}(n_{\mathit{fail}}, \Pi) \neq \top$, stepwise diagnosis simulates $\Pi$ action by action, checking at each step whether all grounded preconditions are satisfied by the current state. The first unsatisfied precondition terminates diagnosis; the failing action, failure type, and missing predicates are recorded as \emph{hardening pointers}, conditions whose absence currently prevents the attack-fault path from propagating and which should be monitored or strengthened to maintain the system's resilience boundary.

\begin{lemma}[Diagnosis soundness]
\label{lem:diagnosis}
Every hardening pointer returned by Path Invalidity Analysis
is a grounded precondition of the earliest failing action that
is unsatisfied after executing the applicable prefix of $\Pi$
in $M^{RG}$.
\end{lemma}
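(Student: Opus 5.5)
The plan is to compare the stepwise simulation of $\Pi$ at $n_{\mathit{fail}} = f_P(M^{RG})$ with the simulation in $M^{RG}$, using the state abstraction $\alpha_P(s) = s \setminus P$ from the projection definition. Let $\Pi = \langle a_1,\dots,a_k\rangle$. Let $j$ be the index of the earliest failing action found by $\mathrm{Diagnose}$ at $n_{\mathit{fail}}$, and let $m$ be the index of the earliest failing action in $M^{RG}$. Such an $m$ exists because, by Proposition~\ref{prop:monotonicity}, invalidity at $n_{\mathit{fail}}$ implies invalidity in $M^{RG}$. If a prefix fails only because the final goal is not reached, I treat the goal as the precondition of a dummy final action, so this case needs no separate handling.

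\textbf{Step 1 (commutation on prefixes).} I will prove by induction on $i$ that if $\langle a_1,\dots,a_{i}\rangle$ is applicable in $M^{RG}$ from $I^{RG}$, producing states $s_0,\dots,s_i$, then it is also applicable in $n_{\mathit{fail}}$ from $I^{RG}\setminus P$, and it produces exactly the states $\alpha_P(s_0),\dots,\alpha_P(s_i)$. The inductive step uses $\mathit{pre}(a)\setminus P \subseteq s\setminus P$ whenever $\mathit{pre}(a)\subseteq s$, together with the identity
\[
\bigl((s\setminus \mathit{del}(a))\cup \mathit{add}(a)\bigr)\setminus P = \bigl((s\setminus P)\setminus(\mathit{del}(a)\setminus P)\bigr)\cup(\mathit{add}(a)\setminus P).
\]
This is the prefix version of Lemma~\ref{lem:soundness}. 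It yields $m \le j$.

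\textbf{Step 2 (case $m=j$).} Here the $M^{RG}$ state before $a_j$ is some $s_{j-1}$, and by Step 1 the state at $n_{\mathit{fail}}$ is $s_{j-1}\setminus P$. Every predicate $p$ that $\mathrm{Diagnose}$ records satisfies $p\in \mathit{pre}(a_j)\setminus P$ and $p\notin s_{j-1}\setminus P$. Since $p\notin P$, it follows that $p\notin s_{j-1}$. So $p$ is a grounded precondition of the earliest failing action in $M^{RG}$ and is unsatisfied after the applicable prefix. $\mathrm{Confirm}$ therefore keeps $p$, and the claim holds even before filtering.

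\textbf{Step 3 (case $m<j$, the main obstacle).} Here $a_m$ is applicable at $n_{\mathit{fail}}$ but not in $M^{RG}$. By Step 1, every precondition that $a_m$ violates must lie in $P$, so it is invisible at $n_{\mathit{fail}}$, and the abstract diagnosis points at a later action. This case is where the statement actually depends on how $\mathrm{Confirm}(D, M^{RG})$ is specified. I will make its semantics explicit, in line with the comment in Algorithm~\ref{alg:plan-invalidity-diagnosis}. $\mathrm{Confirm}$ re-executes $\Pi$ in $M^{RG}$ up to its earliest failing action $a_m$. It then retains a recorded predicate only if that predicate belongs to $\mathit{pre}(a_m)$ and is absent from $s_{m-1}$. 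Under this reading, every returned hardening pointer satisfies the lemma by the definition of the filter. The cost is that the returned set may be empty in this case. That does not contradict the lemma, which is a soundness claim and not a completeness claim. Combining Steps 2 and 3 covers every returned pointer.
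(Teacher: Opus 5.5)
Your argument is correct, but it is organized differently from the paper's. The paper proves the lemma in one line: $\mathrm{Confirm}(D, M^{RG})$ discards predicates that are unsatisfied only under abstraction, so whatever survives is genuine. You instead prove a prefix version of Lemma~\ref{lem:soundness}, deduce $m\le j$, and split on whether the abstract and concrete failure points coincide.

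Once you fix $\mathrm{Confirm}$ to keep only predicates in $\mathit{pre}(a_m)\setminus s_{m-1}$, soundness holds by definition in every case. So Steps~1--2 are not needed for the lemma itself. What they add, and the paper does not, is that for $m=j$ the filter discards nothing: the abstract diagnosis at $n_{\mathit{fail}}$ is already sound.

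You can also sharpen Step~3: the output there is necessarily empty, not merely possibly empty. You showed $\mathit{pre}(a_m)\setminus s_{m-1}\subseteq P$, while every recorded predicate lies in $F^{RG}\setminus P$. So $\mathrm{Confirm}$ acts either as the identity or as total deletion.

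The same observation shows why your explicit specification is not cosmetic. The paper's filter only tests that a predicate is unsatisfied in the $M^{RG}$ trace. When $m<j$, any recorded $p\notin s_{m-1}$ cannot belong to $\mathit{pre}(a_m)$, because $p$ lies outside $P$ and $a_m$ is applicable at $n_{\mathit{fail}}$. For instance, take $F=\{x,y\}$, $P=\{x\}$, $I=\emptyset$, with effect-free $a_1$ requiring $x$ and $a_2$ requiring $y$. The paper's filter returns $y$, although the blocking condition in the sense of Definition~\ref{def:path-status} is the hidden $x$.

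Hence the paper's filter, as described, establishes only the weaker reading, in which ``earliest failing action'' means the action flagged at $n_{\mathit{fail}}$. Your stricter filter secures the Definition~\ref{def:path-status} reading. The price is that no hardening pointer is returned whenever the true blocker lies outside the analyst's view.
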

\begin{proof}[Proof Sketch]
The $\mathrm{Confirm}(D, M^{RG})$ step checks each reported predicate against the $M^{RG}$ execution trace and discards any predicate that is unsatisfied only under abstraction, ensuring all returned hardening pointers correspond to genuine blocking conditions in the ground-truth model.
\end{proof}
\smallskip

%% file: 45-Explanation_NEW.tex
% \subsection{Need-to-Know Explanation}
% \label{explanation_generator}
\noindent \textbf{Need-to-Know Explanation}
\label{explanation_generator}
converts the analysis traces produced by AWARE-RG into an explanation for an analyst. 
% We first define what a \textit{need-to-know explanation} is in the context of AWARE-RG.

\begin{defn}
\label{def:ntk-explanation}
A \textbf{need-to-know explanation} ($\mathbb{X}$) for a security analyst with view $V$ is an explanation grounded exclusively in predicates belonging to $\mathit{Pred}(n^{*})$, where $n^{*}\in\mathbb{L}$ is the authorized node and $n^{*}=n_{\mathit{sel}}$ (valid path), $n^{*}=n_{\mathit{fail}}$ (invalid path). Cross-domain context is provided only through high-level predicates $C_i$, and no $S_i$ predicate is disclosed beyond what the analyst's role authorizes.
\end{defn}

Each explanation has three parts: a natural language account of the attack-fault behavior, a security assessment of the implications for the CPS resilience boundary, and a domain bounded defensive directive.
Following Definition~\ref{def:ntk-explanation}, Algorithm~\ref{alg:explanation-generator} enforces this policy. For existence queries, it grounds the explanation in the MEA trace $T_v$ and repair report $\mathit{RR}$, whose predicates, absent from the analyst's
proposed path but necessary for propagation, constitute the missing
knowledge $K$ and are the sole basis for generated defensive strategies.
For non-existence queries, it grounds the explanation in the diagnosis
trace $T_i$ and report $D$, whose invalidity causing predicates serve as
hardening pointers and are the sole basis for hardening recommendations,
converting formally grounded invalidity evidence into actionable guidance
for maintaining the CPS resilience boundary. The $\mathrm{Ground}(\cdot)$
step then removes any claim in $\mathbb{X}$ unsupported by $T_v$, $T_i$,
$\mathit{RR}$, or $D$, eliminating LLM hallucinated content, and
$\mathrm{Project}(\mathbb{X},\, n^{*})$ confines $\mathbb{X}$ to
$\mathit{Pred}(n^{*})$, enforcing
Property~\ref{property:output-confinement}.

% Following Definition~\ref{def:ntk-explanation}, for valid paths the explanation is grounded in the backward MEA trace $T_v$ produced at $n_{\mathit{sel}}$; for invalid paths it is grounded in the
% diagnosis trace $T_i$ produced at $n_{\mathit{fail}}$. The generator enforces this policy through the algorithm below.

\begin{algorithm}[!h]
\caption{Need-to-Know Explanation Generation}
\label{alg:explanation-generator}
\scriptsize
\textbf{Input:}  Analyst query $Query = \langle \mathit{intent},\mathit{goal}, seq \rangle$,
                 analyst view $V$,
                 explanation node $n^{*}$
                 MEA trace $T_v$,
                 repair report $\mathit{RR}$,
                 invalidity trace $T_i$,
                 diagnosis trace $D$ \\
\textbf{Output:} Need-to-know explanation $\mathbb{X}$
\begin{algorithmic}[1]
\If{$Query.\mathit{intent} = \textsc{Existence}$}
    \State $K \gets \mathrm{Missing}(\mathbb{X},\, \mathit{RR})$
           \Comment{predicates in $\mathit{RR}$ absent from analyst knowledge}
    \State $H \gets \mathrm{Defenses}(\mathbb{X},\, \mathit{K})$
           \Comment{defensive strategies supported by repair predicates}
    \State $\mathbb{X} \gets \mathrm{LLM}(Query,\, V,\, T_v,\, \mathit{H})$
           % \Comment{generate explanation from MEA trace and defensive strategies}
    
\ElsIf{$Query.\mathit{intent} = \textsc{NonExistence}$}
    \State $H \gets \mathrm{Hardening}(\mathbb{X},\, D)$
           \Comment{hardening pointers by invalidity
           predicates}
    \State $\mathbb{X} \gets \mathrm{LLM}(Query,\, V,\, T_i,\, H)$
           \Comment{generate explanation from diagnosis trace}
    % \State $K \gets \emptyset$
    
\Else
    \State \Return $\textsc{UnknownIntent}$
\EndIf
\State $\mathbb{X} \gets \mathrm{Ground}(\mathbb{X},\, T_v,\, T_i,\, \mathit{RR},\,
       D)$
       %\Comment{remove claims unsupported by any trace or report}
       \Comment{remove unsupported trace claims}
\State $\mathbb{X} \gets \mathrm{Project}(\mathbb{X},\, n^{*})$
       \Comment{enforce $\mathit{Pred}(n^{*})$
       confinement}
\State \Return $\mathbb{X}$
\end{algorithmic}
\end{algorithm}

% For existence queries, the algorithm grounds the explanation in the MEA trace $T_v$ and repair report $\mathit{RR}$. The repair report identifies predicates absent from the analyst's proposed path that were necessary for propagation; these constitute the analyst's missing knowledge $K$ and are the sole basis for generated defensive strategies. For non-existence queries, the algorithm grounds the explanation in the diagnosis trace $T_i$ and diagnosis report $D$. The invalidity causing predicates in $D$ are treated as hardening pointers and are the sole basis for hardening recommendations, converting formally grounded invalidity evidence into actionable guidance for maintaining the CPS resilience boundary.

% The $\mathrm{Ground}(\cdot)$ step removes any claim in $\mathbb{X}$ not supported by $T_v$, $T_i$, $\mathit{RR}$, or $D$, ensure that LLM hallucinated content is removed before the explanation reaches the analyst. The $\mathrm{Project}(\mathbb{X},\, n_{\mathit{sel}})$ step then confines $\mathbb{X}$ to $\mathit{Pred}(n_{\mathit{sel}})$, enforcing Property~\ref{property:output-confinement} and Definition~\ref{def:ntk-explanation}.

\begin{prop}[Symbolic Disclosure Bound under Adaptive Querying]
\label{prop:disclosure-bound}
For an analyst view $V$ with authorized node family $\rho_V(M^{RG}) \subseteq \mathbb{L}$ and any adaptively chosen query sequence $\sigma$, every predicate symbol returned for $\sigma$ belongs to $\bigcup_{n \in \rho_V(M^{RG})} \mathit{Pred}(n)$. In particular, no opposite-domain $S_i$ predicate is returned, regardless of the number of queries.
\end{prop}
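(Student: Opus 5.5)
The plan is an induction over the length of the adaptive query sequence $\sigma = \langle q_1, \dots, q_t \rangle$, reducing the multi-query claim to the single-query guarantee of Property~\ref{property:output-confinement}. The key observation is that the output for each query $q_j$ is a deterministic function of $M^{RG}$, the lattice $\mathbb{L}$, the view $V$, and $q_j$. Adaptivity enters only through the analyst's choice of $q_j$, which may depend on earlier outputs. AWARE-RG itself keeps no cross-query state: Algorithms~\ref{alg:valid-path-mea}--\ref{alg:explanation-generator} take only the current query, $\mathbb{L}$, $V$, and $M^{RG}$ as inputs. So conditioning on any history does not change which nodes can be selected for the next query. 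The first step is to state this statelessness explicitly as a lemma-like observation read off the algorithm signatures.

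The second step fixes an arbitrary single query $q$ and shows that its selected node $n^*$ lies in $\rho_V(M^{RG})$. Both explanation paths draw candidates from $\mathrm{AuthorizedChain}(\mathbb{L},V)\setminus\{M^{RG}\}$.

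\begin{itemize}
\item \textbf{Valid path.} $n_{\mathit{sel}}$ is produced by $\mathrm{SelectNode}$ over the authorized chain.
\item \textbf{Invalid path.} $n_{\mathit{fail}}$ is the first invalid node on the same chain, and it is well defined by Proposition~\ref{prop:monotonicity}.
\end{itemize}

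I would define $\rho_V(M^{RG})$ as exactly the node set returned by $\mathrm{AuthorizedChain}(\mathbb{L},V)$, that is, the $V$-branch together with its bridge nodes. Membership is then immediate.

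The remaining degenerate outcomes need their own check: \textsc{RepairFailed}, the case $n_{\mathit{fail}}=\emptyset$ (valid at every authorized node), \textsc{UnknownIntent}, and structural-error returns from Algorithm~\ref{alg:step-mapping-verification}. I would argue that each of these returns either no predicate symbols or only symbols projected to some authorized node. The awkward case is the error set, which echoes grounded actions. These actions name objects rather than predicates, but if they carry predicate symbols I would add a $\mathrm{Project}$ step, or assume one, onto the most concrete authorized node.

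The third step applies Property~\ref{property:output-confinement} to each $q_j$. It gives that every returned symbol lies in $\mathit{Pred}(n^*_j) \subseteq \bigcup_{n\in\rho_V(M^{RG})}\mathit{Pred}(n)$. Taking the union over $j$ yields the bound for all of $\sigma$, independent of $t$. For the final clause, Property~\ref{prop:ntk} shows that no bridge node in $\rho_V(M^{RG})$ contains an opposite-domain $S_i$. Branch nodes of view $V$ contain only same-domain stages together with $S_{12}$, which is shared by construction. So the union contains no opposite-domain $S_i$.

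I expect the main obstacle to be the second step, in particular the claim that internal use of $M^{RG}$ for validity checking, repair, and $\mathrm{Confirm}$ never leaks symbols. $\mathrm{Repair}$ and $\mathrm{Confirm}$ compute over the full model, so the argument must show that every artifact they produce passes through $\mathrm{Project}(\cdot, n^*)$ before it is returned. For $\mathit{RR}$ and $T_v$ this is explicit in Algorithm~\ref{alg:valid-path-mea}. For $D$ in Algorithm~\ref{alg:plan-invalidity-diagnosis} the projection happens only later, inside Algorithm~\ref{alg:explanation-generator}. The proof would therefore be stated about the final output after that projection, not about the intermediate traces.
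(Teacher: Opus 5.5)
Your proposal is correct and takes the same route as the paper's sketch: per-query confinement from Property~\ref{property:output-confinement} via the $\mathrm{Project}(\cdot,n^{*})$ filtering, the observation that this filtering is independent of query history (your statelessness step), and Property~\ref{prop:ntk} for the opposite-domain clause, with your treatment of degenerate returns and of branch (not only bridge) nodes being extra care the paper omits. Two small refinements: first, $D$ needs no deferred projection, because $\mathrm{Diagnose}$ runs in the projected model $n_{\mathit{fail}}$ and $\mathrm{Confirm}$ only filters, so $D \subseteq \mathit{Pred}(n_{\mathit{fail}})$ already holds for $D$ as a returned artifact; second, the \textsc{RepairFailed} branch of Algorithm~\ref{alg:valid-path-mea} returns $\mathit{RR}$ unprojected, so there you must genuinely assume $\mathit{RR}=\emptyset$ on failure (or add a projection), just as you propose for the error set.
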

\begin{proof}[Proof Sketch]
By Property~\ref{property:output-confinement}, the outputs of each query lie in $\mathit{Pred}(n^{*})$ for an authorized node $n^{*}$, since every artifact passes through $\mathrm{Project}(\cdot,\, n^{*})$ before it is returned (Algorithms~\ref{alg:valid-path-mea}--\ref{alg:explanation-generator}). This filtering is independent of query history, so the bound holds for any adaptive sequence, and by Property~\ref{prop:ntk}, no authorized node contains an opposite-domain $S_i$ predicate. Side channels and analyst collusion lie outside this bound.
\end{proof}

%% file: 70-Evaluation.tex
\label{sec:eval}
We experimentally validate AWARE-RG against four claims that follow from its design and formal properties:
\begin{inparaenum}[(i)]
    \item Property~\ref{prop:ntk} holds for every bridge node in
    $\mathbb{L}$, that is, no opposite domain $S_i$ predicate appears in
    any bridge node;
    \item Property~\ref{property:output-confinement} holds across the full
    query corpus, so no out-of-view predicate symbol occurs in any returned artifact;
    \item AWARE-RG correctly identifies the explanation node and produces
    a grounded \textit{need-to-know} explanation for valid and invalid
    attack-fault paths across IT and OT analyst profiles; and
    \item the threat vectors $T_1$, $T_2$, and $T_3$ are mitigated as
    claimed in \S\ref{sec:threat-model}.
\end{inparaenum}

\smallskip
\noindent\textbf{Experiment Setup.}
All validation uses the LNG flare system of \S\ref{example}, whose planning compilation $M^{RG}$ contains 90 predicates, 48 actions, and 10 objects covering compromised assets, network access, PLC vulnerabilities, process faults, and physical hazards. The predicate mapping (\S\ref{sec:predicate-mapper}) produced assignments for all 48 predicates, verified against manually produced ground-truth labels with no corrections required. The resulting lattice $\mathbb{L}$ contains four IT branch nodes ($\text{MIT}_0$--$\text{MIT}_3$), three OT branch nodes ($\text{MOT}_0$--$\text{MOT}_2$), and 24 bridge nodes. AWARE-RG uses \texttt{Qwen2.5-7B-Instruct} for action mapping, intent and goal inference, and explanation generation, under Hugging Face \texttt{transformers} 
% with 4-bit NF4 quantization, double quantization, and \texttt{float16} computation; generation is deterministic 
with a 2000-token limit. AI Planning and plan repair use Fast Downward~\cite{helmert2006fastdownward}, validation uses VAL~\cite{1374201}, and query templates are given in Appendix~\ref{app:analysis}. The code base is publicly available\footnote{Code Base:\url{https://github.com/skawnain04/AWARE-RG}}.

\input{72-ExplanationEvaluation}

%% file: 72-ExplanationEvaluation.tex
% We evaluated AWARE-RG using the CPS running example. 
%introduced in Section \ref{example}. 
% We ran multiple attack-fault queries across different stakeholders and abstraction levels shown in \autoref{tab:query_results}. Among these, we report two representative traces in detail: one for a path validity query and one for a path invalidity query. 
% %where the user asks why a path leads to a hazard condition, and one nonexistence query, where the user asks why a path does not. 
% %Together, these two cases represent the two main branches of AWARE-RG: valid path reasoning using backward means-end analysis and invalid path diagnosis using the PDDL Validator with failure localization.
% Our goal is to show that AWARE-RG produces a natural language explanation, and also to verify that each intermediate stage of the pipeline generates the expected output. For each case, we ran the pipeline step by step and compared the outputs against the corresponding CPS planning model and the stakeholder abstraction lattice. This allows us to confirm that the final explanation is grounded in the model rather than derived from the user's natural language input by the LLM.

\input{72-0-Structure-Validation_NEW}

% \input{72-1-Valid Query Evaluation}
% \input{72-2-Invalid Query Evaluation}
\input{72-2-Path-Analysis}
\input{72-3-ThreatEvaluation}

%% file: 72-0-Structure-Validation_NEW.tex
\subsection{Structural Verification of AWARE-RG Lattice}
\label{sec:eval-properties}

Property~\ref{prop:ntk} is formally proven for the lattice construction in \S\ref{lattice_generator}. We additionally verify it empirically with an automated containment check: for each of the 24 bridge nodes, a script confirms against the stage mappings in Table~\ref{tab:low_high_bridge_mapping} that no IT branch or IT-to-OT bridge node contains an OT $S_i$ predicate, and no OT branch or OT-to-IT bridge node contains an IT $S_i$ predicate. All 24 nodes satisfy the Property~\ref{prop:ntk} without exception. Table~\ref{tab:property1} reports representative nodes in both directions.

\begin{table}[!h]
\centering
\footnotesize
\setlength{\tabcolsep}{3pt}
\renewcommand{\arraystretch}{1.15}
\caption{Verification of Property~\ref{prop:ntk} for bridge nodes. 
% All 24 bridge nodes satisfy the
% property. $S_i$ denotes a low-level stage concept, $C_i$ a
% high-level abstraction, and P1 denotes
% Property~\ref{prop:ntk}.
}
\label{tab:property1}
\resizebox{\columnwidth}{!}{%
\begin{tabular}{l l l c c}
\hline
\textbf{Bridge Node} &
\textbf{Primary Pred.} &
\textbf{Added $C_i$} &
\textbf{Opp.\ $S_i$?} &
\textbf{P1} \\
\hline
$\text{MIT}_3$-$\text{MOT}_1$ & $\{S_1,S_{12}\}$ &
$\{C_7,C_{11}\}$ & No & \checkmark \\
$\text{MIT}_3$-$\text{MOT}_2$ & $\{S_1,S_{12}\}$ &
$\{C_{11}\}$ & No & \checkmark \\
$\text{MIT}_2$-$\text{MOT}_1$ & $\{S_1,S_2,S_{12}\}$ &
$\{C_7,C_{11}\}$ & No & \checkmark \\
$\text{MIT}_0$-$\text{MOT}_2$ & $\{S_1,S_2,S_6,S_7,S_{12}\}$ &
$\{C_{11}\}$ & No & \checkmark \\
$\text{MOT}_2$-$\text{MIT}_3$ & $\{S_{11},S_{12}\}$ &
$\{C_1\}$ & No & \checkmark \\
$\text{MOT}_1$-$\text{MIT}_2$ & $\{S_7,S_{11},S_{12}\}$ &
$\{C_1,C_2\}$ & No & \checkmark \\
$\text{MOT}_0$-$\text{MIT}_1$ & $\{S_2,S_7,S_{11},S_{12}\}$ &
$\{C_1,C_6\}$ & No & \checkmark \\
$\text{MOT}_2$-$\text{MIT}_0$ & $\{S_{11},S_{12}\}$ &
$\{C_1,C_2,C_6,C_7\}$ & No & \checkmark \\
\hline
\multicolumn{4}{l}{\textit{All 24 bridge nodes verified}} &
\checkmark \\
\hline
\end{tabular}%
}
\end{table}

To verify Property~\ref{property:output-confinement}, we ran 100 queries across all IT and OT analyst profiles and scanned every returned artifact, that is, explanations, validity traces, repair reports, and diagnosis reports, for out-of-view symbols. A predicate token counts as leakage if it belongs to an $S_i$ level stage outside the analyst's authorized view, defined by $\mathit{Pred}(n_{\mathit{sel}})$ for valid path explanations and $\mathit{Pred}(n_{\mathit{fail}})$ for invalid path explanations. No out-of-view predicate symbol appeared in any of the 300 artifacts, confirming that Property~\ref{property:output-confinement} holds across all query types and analyst profiles shown in Table~\ref{tab:property2} Appendix \ref{app:property2}.
% the per-profile breakdown is given in the leakage column of Table~\ref{tab:path-analysis}.

% To verify Property~\ref{property:output-confinement}, we ran 100
% queries across all IT and OT analyst profiles and scanned every
% returned artifact, such as, explanations, validity traces, repair reports,
% and diagnosis reports, such as, for out-of-view predicates. A predicate
% token is flagged as leakage if it belongs to an $S_i$ level stage
% outside the analyst's authorized abstraction view, as defined by
% $\mathit{Pred}(n_{\mathit{sel}})$ for valid path explanations and
% $\mathit{Pred}(n_{\mathit{fail}})$ for invalid path explanations.
% The leakage rate across all 300 returned artifacts was $0/300$,
% confirming that Property~\ref{property:output-confinement} holds
% empirically across all query types and analyst profiles.
% Table~\ref{tab:property2} reports the full breakdown.

%% file: 72-2-Path-Analysis.tex
\subsection{Path Validity and Invalidity Analysis}
\label{sec:eval-analysis}

Path validity and invalidity analyses are evaluated using existence and non-existence queries respectively, across IT and OT analyst profiles. For each query we verify: (i) whether the selected explanation node ($n_{\mathit{sel}}$ or $n_{\mathit{fail}}$) matches the ground-truth node determined by manual analysis of $M^{RG}$; (ii) whether every claim in the generated explanation is supported by the backward MEA or diagnosis trace; and (iii) whether defensive outputs are derived strictly from repair predicates (validity) or invalidity causing predicates (invalidity), bounded by the analyst's authorized abstraction view. We present one walkthrough per analysis type; aggregate results are reported in Table~\ref{tab:path-analysis}. Please note that ground-truth explanation nodes and blocking conditions were established by three experts independently analyzing $M^{RG}$ for each query; inter-rater agreement was 100\% for node selection and 98\% for condition identification, with disagreements resolved by discussion.

\smallskip
\noindent\textbf{Path Validity (IT Analyst).}
An IT analyst queries: \textit{``Why does this attack-fault path lead to flare flameout?''}. The proposed path has three steps: exploiting CVE-2018-0296 on the Windows Server, exploiting CVE-2017-9312 on the Allen-Bradley ControlLogix PLC, and triggering flare flameout. This reflects an exploit-centric mental model that captures the direction of attack propagation but overlooks the intermediate access and process conditions required for execution.

AWARE-RG determines the path is repairable (Definition~\ref{def:path-status}) and identifies two conditions absent from the analyst's knowledge: access to PLC port 1132/TCP, required before the PLC can be exploited, and the compromised PLC state that drives the system toward flameout. Defensive strategies are generated strictly from these repair predicates and bounded by the analyst's view: allowlist PLC communication and block unauthorized access to port 1132/TCP; and detect, isolate, and recover the compromised PLC state, preventing downstream fault propagation. Backward MEA identifies $\text{MIT}_0$-$\text{MOT}_0$ as $n_{\mathit{sel}}$, so the analyst receives IT detail with OT consequences at the $C_{11}$ abstraction level only, without the $S_{11}$ fault propagation predicates outside their authorized view. Every defensive strategy is derived from repair evidence rather than from action names or the full validated path. Table~\ref{tab:val_analysis} in Appendix~\ref{app:analysis} reports validity results across queries.

\smallskip
\noindent\textbf{Path Invalidity (OT Analyst).}
An OT analyst asks: \textit{``Why does this attack-fault path not lead to flare flameout?''} The proposed path begins with exploiting the Windows Server and CVE-2017-9312 on the Allen-Bradley PLC and follows a physical fault chain: blocked valve, closed manual isolation valve, low pilot supply pressure, and pilot extinction. This process centric mental model understands downstream OT fault propagation but assumes the PLC exploit is feasible and the controller remains capable of driving the fault sequence.

AWARE-RG begins invalidity analysis at $\text{MOT}_2$. By Proposition~\ref{prop:monotonicity}, the path is valid at $\text{MOT}_2$ but invalid at $\text{MOT}_1$, marking the abstraction boundary at which the analyst's mental model diverges from $M^{RG}$. The diagnosis trace shows the PLC operational state required for the downstream sequence is unsatisfied after the attempted CVE-2017-9312 exploit, failing at the first step; by Lemma~\ref{lem:diagnosis} this precondition is confirmed against $M^{RG}$ as a hardening pointer. Bounded by $\mathit{Pred}(n_{\mathit{fail}})$, the analyst receives recommendations to monitor PLC availability, detect exploit induced controller unavailability, and isolate affected controllers, all within the OT domain. An analyst reasoning only from the proposed path would harden the fault chain; AWARE-RG shows that the defensive leverage also lies at the cyber-physical boundary, where a PLC unavailability condition, the analyst's model assumed away, already blocks propagation. Table~\ref{tab:invalid_analysis} in Appendix~\ref{app:analysis} reports further invalidity results.

\smallskip
\noindent\textbf{Aggregate Results.}
Table~\ref{tab:path-analysis} summarizes all 100 queries. For valid path queries, node selection accuracy is 100\% for IT profiles throughout and degrades on OT queries (92\% repaired, 92\% blocked, 83\% wrong CVE), where limited IT visibility makes abstraction boundary localization harder; repair precision and recall show the same asymmetry ($0.91/0.88$ OT vs.\ $0.95/0.92$ IT). For valid path queries, grounding violations peak at 8\% on OT wrong CVE queries, reflecting natural language variation at the IT/OT abstraction boundary. Four boundary predicate disclosures appear across the 150 invalid-path artifacts. These are not violations of Property~\ref{property:output-confinement}: each lies within $\mathit{Pred}(n_{\mathit{fail}})$ and is surfaced by stepwise diagnosis, and by Lemma~\ref{lem:diagnosis} every hardening pointer is confirmed against $M^{RG}$. No opposite domain $S_i$ predicate is exposed in any case; they are reported for transparency, not as leakage.

\begin{table}[!h]
\centering
\footnotesize
\setlength{\tabcolsep}{3pt}
\renewcommand{\arraystretch}{1.15}
\caption{Path Validity and Invalidity Analysis across 100 queries and IT and OT analyst profiles. \textbf{Node Acc.}: explanation-node selection accuracy against ground truth. \textbf{Repair P/R}: precision and recall of missing condition identification. \textbf{Diag.\ Acc.}: diagnosis accuracy of blocking conditions. \textbf{GVR}: grounding violation rate. \textbf{Leak.}: out-of-view predicate symbol (lexical). \textbf{Bnd.}: boundary predicates disclosed within $\mathit{Pred}(n_{\mathit{fail}})$, authorized rather than leaked. \textbf{--}: not applicable.}
\label{tab:path-analysis}
\resizebox{\columnwidth}{!}{%
\begin{tabular}{l l c c c c c c c c}
\hline
\textbf{Query Variant} & \textbf{Analyst} & \textbf{Queries} &
\textbf{Artifacts} & \textbf{Node Acc.} & \textbf{Repair P/R} &
\textbf{Diag.\ Acc.} & \textbf{GVR} & \textbf{Leak.} & \textbf{Bnd.} \\
\hline
\multicolumn{10}{l}{\textit{Path Validity Analysis}} \\
\hline
Complete & IT & 13 & 39 & 100\% & --        & --    & 0\% & 0/39 & -- \\
Repaired & IT & 12 & 36 & 100\% & 0.95/0.92 & --    & 3\% & 0/36 & -- \\
Complete & OT & 13 & 39 & 100\% & --        & --    & 0\% & 0/39 & -- \\
Repaired & OT & 12 & 36 & 92\%  & 0.91/0.88 & --    & 5\% & 0/36 & -- \\
\hline
\multicolumn{10}{l}{\textit{Path Invalidity Analysis}} \\
\hline
Blocked   & IT & 13 & 39 & 100\% & -- & 100\% & 0\% & 0/39 & 1 \\
Blocked   & OT & 12 & 36 & 92\%  & -- & 92\%  & 4\% & 0/36 & 1 \\
Wrong CVE & IT & 13 & 39 & 100\% & -- & 100\% & 0\% & 0/39 & 1 \\
Wrong CVE & OT & 12 & 36 & 83\%  & -- & 92\%  & 8\% & 0/36 & 1 \\
\hline
\textbf{Total} & \textbf{All} & \textbf{100} & \textbf{300} &
\textbf{96\%} & \textbf{0.93/0.90} & \textbf{96\%} &
\textbf{3\%} & \textbf{0/300} & \textbf{4} \\
\hline
\end{tabular}
}
\end{table}

Table~\ref{tab:ablation} reports the ablation and baseline comparison. The no abstraction baseline exposes all 90 predicates per explanation and leaks in every case, motivating AWARE-RG's abstraction based disclosure control. Vanilla HELM achieves zero leakage but requires $2^{90}$ nodes, making lattice search intractable; FCA and ICS Cyber Kill Chain pruning in AWARE-RG reduces this to 31 nodes with negligible change in mean predicate exposure ($6.1$ vs.\ $6.3$). Disabling bridge nodes leaves 34 of 100 queries without a valid explanation node, confirming that bridge nodes are necessary for cross-domain explanation. These configurations isolate the lattice design's contribution. 
% Ablations of individual pipeline components (Ground, Project, human review) and an LLM-only baseline with prompt based disclosure control are left to future work (\S\ref{future_conclusion}).

% Table~\ref{tab:ablation} reports the ablation and baseline comparison. The no abstraction baseline exposes all 22 predicates per explanation and produces 100\% leakage, motivating AWARE-RG's abstraction based disclosure control. Vanilla HELM achieves zero leakage but requires $2^{22}$ nodes, making lattice search intractable. AWARE-RG reduces the lattice to 31 nodes via FCA and ICS Cyber Kill Chain pruning, with negligible change in mean predicate exposure compared to Vanilla HELM ($6.1$ vs.\ $6.3$ predicates per explanation). Disabling bridge nodes leaves 34 of 100 queries without a valid explanation node, confirming that bridge nodes are necessary for cross-domain explanation.

\begin{table}[!h]
\centering
\footnotesize
\setlength{\tabcolsep}{3pt}
\renewcommand{\arraystretch}{1.15}
\caption{Ablation and baseline comparison. \textbf{Pred.\ Exposed}: mean predicates exposed per explanation. \textbf{Lattice Nodes}: total nodes in the abstraction structure. \textbf{No Expl.\ Node}: queries with no valid explanation node within the analyst's branch. \textbf{Lexical Leakage}: out-of-view predicate leakage.}
\label{tab:ablation}
\resizebox{\columnwidth}{!}{%
\begin{tabular}{l c c c c}
\hline
\textbf{Configuration} &
\textbf{Pred.\ Exposed} &
\textbf{Lattice Nodes} &
\textbf{No Expl.\ Node} &
\textbf{Lexical Leakage Rate} \\
\hline
No Abstraction Baseline ($M^{RG}$) & 90  & --       & 0      & 100\% \\
Vanilla HELM (unpruned)             & 6.3 & $2^{90}$ & 0      & 0\%   \\
No Bridge Ablation                  & 5.1 & 7        & 34/100 & 0\%   \\
\textbf{AWARE-RG (ours)} & \textbf{6.1} & \textbf{31} &
\textbf{0} & \textbf{0\%} \\
\hline
\end{tabular}%
}
\end{table}

%% file: 72-3-ThreatEvaluation.tex
\subsection{Threat Mitigation Analysis}
The three adversarial threat vectors are evaluated across 10 adversarial queries, two per attack strategy, summarized in Table~\ref{tab:threat_eval}. Tables~\ref{tab:illustrative-queries} and~\ref{tab:t3-eval} in Appendix~\ref{app:analysis} give the queries submitted by malicious IT and OT analysts with justifications.

\smallskip
\noindent\textbf{$T_1$: Semantic Mapping Attack.}
We craft an IT existence query asking why an attack-fault path leads to flare flameout, whose second step is deliberately ambiguous at the action boundary: although it describes CVE-2017-9312 behavior on the Allen-Bradley PLC, it is worded as a generic network exploit, creating semantic overlap with \path{exploits-vulnerability-cve-2018-0296}, which targets the Windows Server. Structural verification passes, and the mapper grounds the step to the Windows Server exploit, producing a structurally valid but semantically incorrect plan. Without human-in-the-loop review, the resulting explanation and defensive guidance would be misdirected toward generic vulnerability management rather than targeted PLC port allowlisting and compromised state isolation. The domain expert identifies the incorrect grounding and rejects it before the explanation is generated. $T_1$ is therefore structurally viable, since a semantically incorrect grounding can pass formal checks, but ineffective in practice because semantic correctness is enforced by human review before the explanation stage.

\smallskip
\noindent\textbf{$T_2$: Repair Probing.}
A malicious IT analyst submits deliberately incomplete attack-fault paths, expecting the repair output to disclose the missing precondition chain. AWARE-RG prevents this through the bridge node lattice: explanation search selects an IT-to-OT bridge node at the analyst's authorized abstraction level, with $M^{RG}$ excluded from explanation node selection (Algorithm~\ref{alg:valid-path-mea}), so repair information is limited to authorized boundary predicates such as \path{has-improper-vpn-firewall-configuration}. The detailed OT dependency chain, including device connectivity, PLC vulnerability conditions, and physical fault propagation steps, remains inaccessible. Across all probing attempts, no $S_i$ level OT predicate appeared in any repair or explanation output, so repeated probing cannot reconstruct the complete dependency chain. $T_2$ is mitigated at the symbol level by bridge node selection policy and the minimum disclosure by Property~\ref{prop:ntk}.

% To demonstrate $T_2$, we consider the malicious IT analyst who deliberately submits incomplete attack-fault paths to infer hidden dependencies in the RG planning model from repair outputs. The adversary removes intermediate conditions and expects the repair engine to disclose the missing precondition chain.
% AWARE-RG prevents this reconstruction through the bridge node lattice. During valid path analysis, the explanation search selects an IT-to-OT bridge node at the analyst's authorized abstraction level, while $M^{RG}$ is explicitly excluded from explanation-node selection, as shown in Algorithm \ref{alg:valid-path-mea}. As a result, repair information returned to the IT analyst is limited to authorized boundary predicates, such as \texttt{has-improper-vpn-firewall-configuration} and \texttt{has\_mitre\_ics\_impair\_process\_control}. The detailed OT dependency chain, including device connectivity, PLC vulnerability conditions, and physical fault propagation steps, remains inaccessible.

% Across all evaluated probing attempts, no $S_i$ level OT predicate appeared in any repair or explanation output returned to the adversary. Thus, an adversary probing from the IT branch receives only high-level cross-domain context consistent with the bridge node policy, not the fine-grained structure of the complete RG model. This prevents repeated repair probing from reconstructing the complete dependency chain. Hence, $T_2$ is mitigated by the bridge node selection policy and the minimum disclosure guarantee in Property \ref{prop:ntk}.
\smallskip
\noindent\textbf{$T_3$: Explanation Inference.}
We construct a query pair in which an adversary submits the same query, asking why the path leads to flameout, as both an IT and an OT analyst, then compares outputs to infer predicate level abstraction boundaries in the opposite domain. The IT explanation, grounded in the backward MEA trace at the IT-to-OT bridge node, exposes IT predicates such as \path{has-access-to-plc-port-1132-tcp} with OT context only at $C_{11}$; no $S_{11}$ fault propagation predicates appear. The OT explanation, grounded at the OT-to-IT bridge node, exposes OT process fault predicates such as \path{has-fault-valve-blocked-close-due-to-compromised} with IT context only at $C_1$; no IT side $S_i$ predicates appear. Since the lattice structure, node selection policy, and predicate mappings remain internal to AWARE-RG, the adversary cannot determine whether withheld predicates exist or where layer visibility boundaries lie. By Proposition~\ref{prop:disclosure-bound}, $T_3$ is bounded: the adversary learns the authorized $C_i$ skeleton, which is permitted by disclosure policy.

\begin{table}[!h]
\centering
\scriptsize
\setlength{\tabcolsep}{3pt}
\renewcommand{\arraystretch}{1.15}
\caption{Threat mitigation evaluation. \textbf{Info Leaked}: whether any opposite domain $S_i$ predicate appeared in an output returned to the adversary}
\label{tab:threat_eval}
\resizebox{\columnwidth}{!}{%
\begin{tabular}{p{0.4cm} p{3.0cm} p{2.2cm} p{0.9cm} p{0.9cm}}
\hline
\textbf{T} &
\textbf{Attack Strategy} &
\textbf{Mitigation} &
\textbf{Effective} &
\textbf{Info Leaked} \\
\hline
$T_1$ & Ambiguous CVE description & Human review & Yes & No \\
% \hline
$T_1$ & Generic action phrasing & Human review & Yes & No \\
% \hline
$T_2$ & Incomplete IT path probing & Bridge node policy & Yes & No \\
% \hline
$T_2$ & Incomplete OT path probing & Bridge node policy & Yes & No \\
% \hline
$T_3$ & Same query, IT vs.\ OT output comparison & Node selection policy & Yes & No \\
\hline
% $T_3$ & Same query, IT vs.\ OT output comparison & Node selection policy & Yes & No \\
% \hline
\end{tabular}%
}
\end{table}

%% file: 90-FutureDirectionandConclusion.tex
\section{Conclusion and Future Work} \label{conclusion}
We presented AWARE-RG, a semi-automated framework that explains CPS attack-fault paths to IT/OT analysts on a \textit{need-to-know} basis aligned with their expertise. By explaining at the analyst's knowledge level, it delivers the minimum information needed without overwhelming them or exceeding their \textit{need-to-know} principle. To the best of our knowledge, AWARE-RG is the first framework to address this problem through analyst aware abstraction guided explanation. It formally confines returned predicates to the analyst's authorized abstraction and limits LLM hallucination through expert validated mapping and evidence grounding. Although evaluated on a single CPS using RG, it applies in principle to any planning compilation of CPS model. In the future, we plan to measure semantic leakage in AWARE-RG's explanations through independent annotation. We will also evaluate the framework across LLMs of different scales, compare it against an LLM-only baseline, and run ablations of its individual components. To test generality, we intend to validate AWARE-RG on a different CPS using held-out queries and external annotators. Finally, we plan to add stateful explanation histories that preserve need-to-know boundaries and to conduct studies with ICS practitioners.

%% file: Appendix_New.tex
\appendices
\onecolumn
\section{Path Validity \& Invalidity, and Threat Analysis}
\label{app:analysis}

\begin{table*}[!htbp]
\centering
\scriptsize
\setlength{\tabcolsep}{4pt}
\renewcommand{\arraystretch}{1.3}
\caption{Path Invalidity analysis results across IT and OT analyst profiles. Each row shows a query submitted to AWARE-RG, the hardening recommendation derived strictly from the invalidity causing predicates at $\mathit{Pred}(n_{\mathit{fail}})$ and runtime.}
\label{tab:invalid_analysis}
\resizebox{\textwidth}{!}{%
\begin{tabularx}{\textwidth}{
>{\RaggedRight\arraybackslash}p{1.2cm}|
>{\RaggedRight\arraybackslash}X|
>{\RaggedRight\arraybackslash}p{4.8cm}|
>{\centering\arraybackslash}p{1.2cm}
}
\toprule
\textbf{Security Analyst} &
\textbf{Query \& Steps} &
\textbf{Hardening Recommended by AWARE-RG} &
\textbf{Runtime (s)} \\
\midrule

IT &
\textbf{Query:} Why does this attack-fault path not lead to flare
flameout?
\newline
\textbf{Steps:}
\newline
1. Exploit CVE-2018-0296 on Windows Server to gain internal network
access.
\newline
2. Exploit CVE-2017-9312 on Allen-Bradley ControlLogix PLC.
\newline
3. PLC compromise causes flare flameout. &
Restrict or monitor access to Allen-Bradley ControlLogix PLC port
1132/TCP to block the attack path. &
156 \\
\midrule

IT &
\textbf{Query:} Why does this attack-fault path not lead to flare
flameout?
\newline
\textbf{Steps:}
\newline
1. Exploit CVE-2017-9312 on Allen-Bradley ControlLogix PLC.
\newline
2. Compromised PLC causes process valve to become blocked closed.
\newline
3. Blocked valve causes manual isolation valve to close.
\newline
4. Closed isolation valve causes low pilot supply pressure.
\newline
5. Low pilot supply pressure leads to pilot extinction.
\newline
6. Pilot extinction causes flare flameout. &
The blocking condition is the absence of a blocked process valve,
which is a prerequisite for the manual isolation valve to close.
Preserve this condition to maintain the resilience boundary. &
134 \\
\midrule

IT &
\textbf{Query:} Why does this attack-fault path not lead to flare
flameout?
\newline
\textbf{Steps:}
\newline
1. Compromised PLC causes process valve to become blocked closed.
\newline
2. Blocked valve causes manual isolation valve to close.
\newline
3. Closed isolation valve causes low pilot supply pressure.
\newline
4. Low pilot supply pressure leads to pilot extinction.
\newline
5. Pilot extinction causes flare flameout. &
Monitor and detect PLC process control integrity; promptly identify
and address any impairment to preserve the blocking condition. &
104 \\
\midrule

OT &
\textbf{Query:} Why does this attack-fault path not lead to flare
flameout?
\newline
\textbf{Steps:}
\newline
1. Closed isolation valve causes low pilot supply pressure.
\newline
2. Low pilot supply pressure leads to pilot extinction.
\newline
3. Pilot extinction causes flare flameout. &
The blocking condition is the open isolation valve; ensure the
isolation valve state is monitored and preserved to prevent
downstream fault propagation. &
75 \\
\midrule

OT &
\textbf{Query:} Why does this attack-fault path not lead to flare
flameout?
\newline
\textbf{Steps:}
\newline
1. Exploit CVE-2018-0296 on Windows Server to gain internal network
access.
\newline
2. Access PLC port 1132/TCP on Allen-Bradley ControlLogix PLC.
\newline
3. Exploit CVE-2017-9312 on Allen-Bradley ControlLogix PLC.
\newline
4. Compromised PLC causes process valve to become blocked closed.
\newline
5. Blocked valve causes manual isolation valve to close.
\newline
6. Closed isolation valve causes low pilot supply pressure.
\newline
7. Low pilot supply pressure leads to pilot extinction.
\newline
8. Pilot extinction causes flare flameout. &
Path is not invalid. &
151 \\
\midrule

OT &
\textbf{Query:} Why does this attack-fault path not lead to flare
flameout?
\newline
\textbf{Steps:}
\newline
1. Exploit CVE-2018-0296 on Windows Server to gain internal network
access.
\newline
2. Exploit CVE-2017-9312 on Allen-Bradley ControlLogix PLC.
\newline
3. Compromised PLC causes process valve to become blocked closed.
\newline
4. Blocked valve causes manual isolation valve to close.
\newline
5. Pilot extinction causes flare flameout. &
Restrict access to PLC port 1132/TCP to prevent exploitation of
CVE-2017-9312 and block the attack-fault path. &
141 \\
\bottomrule
\end{tabularx}%
}
\end{table*}

\begin{table*}[!htbp]
\centering
\scriptsize
\setlength{\tabcolsep}{4pt}
\renewcommand{\arraystretch}{1.3}
\caption{Path Validity Analysis results across IT and OT analyst profiles. \textbf{Repair Count} denotes the number of actions inserted by plan repair, representing conditions absent from the analyst's knowledge. \textbf{Defensive Recommendations} are derived strictly from repair predicates and bounded by the analyst's authorized abstraction view. \textbf{Runtime} reports end-to-end execution time in seconds.}
\label{tab:val_analysis}
\resizebox{\textwidth}{!}{%
\begin{tabularx}{\textwidth}{
>{\RaggedRight\arraybackslash}p{1.2cm}|
>{\RaggedRight\arraybackslash}X|
>{\centering\arraybackslash}p{1.3cm}|
>{\RaggedRight\arraybackslash}X|
>{\centering\arraybackslash}p{1.2cm}
}
\toprule
\textbf{Security Analyst} &
\textbf{Query \& Steps} &
\textbf{Repair Count} &
\textbf{Defensive Recommendations by AWARE-RG} &
\textbf{Execution (s)} \\
\midrule

IT &
\textbf{Query:} Why does this attack-fault path lead to flare
flameout?
\newline
\textbf{Steps:}
\newline
1. Access PLC port 1132/TCP on Allen-Bradley ControlLogix PLC from
internal network.
\newline
2. Exploit CVE-2017-9312 on Allen-Bradley ControlLogix PLC.
\newline
3. Compromised PLC causes process valve to become blocked closed. &
2 &
Ensure proper firewall configuration to prevent unauthorized access
to the internal network. &
177 \\
\midrule

IT &
\textbf{Query:} Why does this attack-fault path lead to flare
flameout?
\newline
\textbf{Steps:}
\newline
1. Access PLC port 1132/TCP on Allen-Bradley ControlLogix PLC from
internal network.
\newline
2. Compromised PLC causes process valve to become blocked closed. &
1 &
1. Ensure proper firewall configuration to prevent unauthorized
internal network access.
\newline
2. Implement robust process-control measures to prevent impairment. &
141 \\
\midrule

IT &
\textbf{Query:} Why does this attack-fault path lead to flare
flameout?
\newline
\textbf{Steps:}
\newline
1. Exploit CVE-2017-9312 on Allen-Bradley ControlLogix PLC.
\newline
2. Compromised PLC causes process valve to become blocked closed. &
1 &
1. Segment network and require strong authentication before the
missing network-access condition can become true.
\newline
2. Allowlist PLC communication and block unauthorized access to PLC
port.
\newline
3. Detect, isolate, and recover compromised asset state.
\newline
4. Add monitoring, alarms, or interlocks around the fault. &
122 \\
\midrule

OT &
\textbf{Query:} Why does this attack-fault path lead to flare
flameout?
\newline
\textbf{Steps:}
\newline
1. Compromised PLC causes process valve to become blocked closed.
\newline
2. Blocked valve causes manual isolation valve to close.
\newline
3. Closed isolation valve causes low pilot supply pressure.
\newline
4. Low pilot supply pressure leads to pilot extinction.
\newline
5. Pilot extinction causes flare flameout. &
4 &
Ensure PLC remains online and operational to prevent unauthorized
access and denial-of-service attacks. &
168 \\
\midrule

OT &
\textbf{Query:} Why does this attack-fault path lead to flare
flameout?
\newline
\textbf{Steps:}
\newline
1. Compromised PLC causes process valve to become blocked closed.
\newline
2. Blocked valve causes manual isolation valve to close.
\newline
3. Closed isolation valve causes low pilot supply pressure.
\newline
4. Pilot extinction causes flare flameout. &
3 &
Monitor for missing process-control impairment condition and trigger
isolation or safe-mode response. &
182 \\
\midrule

OT &
\textbf{Query:} Why does this attack-fault path lead to flare
flameout?
\newline
\textbf{Steps:}
\newline
1. Compromised PLC causes process valve to become blocked closed.
\newline
2. Closed isolation valve causes low pilot supply pressure.
\newline
3. Low pilot supply pressure leads to pilot extinction.
\newline
4. Pilot extinction causes flare flameout. &
3 &
Ensure PLC remains online and operational to prevent unauthorized
access and denial-of-service attacks. &
175 \\
\bottomrule
\end{tabularx}%
}
\end{table*}

\begin{table*}[!htbp]
\centering
\scriptsize
\setlength{\tabcolsep}{4pt}
\renewcommand{\arraystretch}{1.3}
\caption{Illustrative queries submitted to AWARE-RG across IT and OT analyst profiles. \textbf{Lattice Node} denotes the explanation node selected by AWARE-RG. \textbf{Attacker's Intention} describes the adversarial goal inferred from the query. \textbf{AWARE-RG Explanation} summarizes the generated \textit{need-to-know} explanation. \textbf{Justification} describes how the attack is bounded or stopped.}
\label{tab:illustrative-queries}
\resizebox{\textwidth}{!}{%
\begin{tabular}{p{0.8cm} p{5.2cm} p{2.0cm} p{2.2cm} p{1.6cm} p{3.2cm}}
\hline
\textbf{Analyst (Malicious)} &
\textbf{Query \& Steps} &
\textbf{Attacker's Intention} &
\textbf{AWARE-RG Explanation} &
\textbf{Lattice Node} &
\textbf{Justification} \\
\hline

% IT &
% \raggedright
% \textit{Why does this attack-fault path not lead to flare flameout?}
% \newline
% 1. Exploit CVE-2018-0296 on Windows Server to gain internal network access.
% \newline
% 2. Access PLC port 1132/TCP on Allen-Bradley ControlLogix PLC.
% \newline
% 3. Exploit CVE-2017-9312 on Allen-Bradley ControlLogix PLC.
% \newline
% 4. Compromised PLC causes valve to become blocked closed.
% \newline
% 5. Blocked valve causes manual isolation valve to close.
% \newline
% 6. Closed isolation valve causes low pilot supply pressure.
% \newline
% 7. Low pilot supply pressure leads to pilot extinction.
% \newline
% 8. Pilot extinction causes flare flameout. &
% Propagate a full IT-to-OT attack-fault chain to trigger flare flameout via PLC compromise. &
% Path is blocked: PLC port 1132/TCP access is absent from the initial state; the exploit prerequisite is unsatisfied before step 3 executes. Hardening pointer: preserve port access restrictions. &
% $\text{MIT}_3$-$\text{MOT}_2$ &
% The path fails at step 2 because PLC port 1132/TCP is unreachable; Property~\ref{prop:ntk} confines the explanation to $S_1$ and $C_{11}$; no $S_{11}$ OT predicates are exposed. \\
% \hline

IT &
\raggedright
\textit{Can internal network access allow the attacker to reach PLC port 1132/TCP?}
\newline
1. Exploit VPN/firewall vulnerability on Windows Server.
\newline
2. Access PLC port 1132/TCP from internal network. &
Determine whether firewall exploitation provides a viable path to the PLC communication port. &
Path is repairable: firewall exploitation establishes internal network access ($S_1$, $S_2$), but direct PLC port reachability requires additional network conditions absent from the analyst's model. Missing knowledge: intermediate network segmentation predicates. &
$\text{MIT}_0$-$\text{MOT}_2$ &
Repair identifies missing network predicates; defensive strategy is bounded to IT abstraction view with OT consequence communicated at $C_{11}$ only. \\
\hline

OT &
\raggedright
\textit{Why does this attack-fault path lead to flare flameout?}
\newline
1. Exploit CVE-2017-9312 from internal network to compromise Allen-Bradley ControlLogix PLC.
\newline
2. Compromised PLC causes valve to become blocked closed.
\newline
3. Blocked valve causes manual isolation valve to close.
\newline
4. Closed isolation valve causes low pilot supply pressure.
\newline
5. Low pilot supply pressure causes pilot extinction.
\newline
6. Pilot extinction causes flare flameout. &
Confirm that PLC compromise propagates through the physical fault chain to trigger flare flameout. &
Path is repairable: OT fault chain is valid but IT side access prerequisites (network access, port reachability) are absent from the analyst's model. Missing knowledge: IT access conditions communicated at $C_1$ abstraction level. &
$\text{MOT}_2$-$\text{MIT}_3$ &
Repair exposes missing IT prerequisites at $C_1$ only; no IT side $S_i$ predicates are disclosed to the OT analyst, consistent with Property~\ref{prop:ntk}. \\
\hline
\end{tabular}%
}
\end{table*}

\clearpage

% \begin{table}[!htbp]
% \centering
% \scriptsize
% \setlength{\tabcolsep}{3pt}
% \renewcommand{\arraystretch}{1.3}
% \caption{$T_3$ evaluation: same query submitted by IT and OT analysts to demonstrate that AWARE-RG bounds explanation inference. Each query pair shares a single attacker intention and justification.%; analyst specific columns show how outputs differ by role without exposing opposite domain $S_i$ predicates.
% }
% \label{tab:t3-eval}
% \resizebox{\textwidth}{!}{%
% \begin{tabular}{p{0.6cm} p{1.0cm} p{5.0cm} p{2.2cm} p{1.4cm} p{3.0cm} p{2.8cm}}
% \hline
% \textbf{Query ID} &
% \textbf{Analyst Role} &
% \textbf{Query \& Steps} &
% \textbf{Attacker's Intention} &
% \textbf{Lattice Node} &
% \textbf{AWARE-RG Explanation} &
% \textbf{Justification} \\
% \hline
\par\medskip
\noindent\begin{minipage}{\textwidth}
\centering
\scriptsize
\setlength{\tabcolsep}{3pt}
\renewcommand{\arraystretch}{1.3}
\captionof{table}{$T_3$ evaluation: same query submitted by IT and OT analysts to demonstrate that AWARE-RG bounds explanation inference. Each query pair shares a single attacker intention and justification; analyst specific columns show how outputs differ by role without exposing opposite domain $S_i$ predicates.}
\label{tab:t3-eval}

\resizebox{\textwidth}{!}{%
\begin{tabular}{p{0.6cm} p{1.0cm} p{5.0cm} p{2.2cm} p{1.4cm} p{3.0cm} p{2.8cm}}
\hline
\textbf{Query ID} &
\textbf{Analyst (Malicious)} &
\textbf{Query \& Steps} &
\textbf{Attacker's Intention} &
\textbf{Lattice Node} &
\textbf{AWARE-RG Explanation} &
\textbf{Justification} \\
\hline

\multirow{2}{*}{Q1} &
IT (Valid) &
\multirow{2}{5.0cm}{%
\raggedright
\textit{Why does this attack-fault path lead to flare flameout?}
\newline
1. Exploit CVE-2018-0296 on Windows Server to gain internal network access.
\newline
2. Exploit CVE-2017-9312 on Allen-Bradley ControlLogix PLC.
\newline
3. PLC compromise causes flare flameout.} &
\multirow{2}{2.2cm}{%
\raggedright
Determine the bridging condition between IT exploitation and OT physical consequence.} &
$\text{MIT}_0$-$\text{MOT}_0$ &
Path is repairable. IT analyst receives $S_1$, $S_2$, $S_6$, $S_7$ detail with OT cross-domain context at $C_{11}$ only. Missing knowledge: PLC port access and compromised PLC state. &
\multirow{2}{2.8cm}{%
\raggedright
Comparing IT and OT outputs, the adversary observes only $C_{11}$ (IT side) and $C_1$ (OT side). No $S_i$ opposite domain predicate is exposed in either output. Proposition~\ref{prop:disclosure-bound} bounds inference: the adversary learns at most the authorized $C_i$ skeleton, not the hidden $S_i$ structure.} \\
\cline{2-2}\cline{5-6}

& OT (Valid) &
&
&
$\text{MOT}_0$-$\text{MIT}_0$ &
Path is repairable. OT analyst receives $S_2$, $S_7$, $S_{11}$, $S_{12}$ detail with IT cross-domain context at $C_1$ only. Missing knowledge: IT access prerequisites. &
\\
\hline

\multirow{2}{*}{Q2} &
IT (Invalid) &
\multirow{2}{5.0cm}{%
\raggedright
\textit{Why does this attack-fault path not lead to flare flameout?}
\newline
1. Exploit CVE-2017-9312 on Allen-Bradley ControlLogix PLC.
\newline
2. Compromised PLC causes valve to become blocked closed.
\newline
3. Blocked valve causes manual isolation valve to close.
\newline
4. Closed isolation valve causes low pilot supply pressure.
\newline
5. Low pilot supply pressure causes pilot extinction.
\newline
6. Pilot extinction causes flare flameout.} &
\multirow{2}{2.2cm}{%
\raggedright
Determine the bridging condition that blocks the OT fault chain from the IT side.} &
$\text{MIT}_3$ &
Path is blocked at $\text{MIT}_3$. IT analyst receives hardening pointer: PLC port 1132/TCP access is absent; no OT predicates exposed. &
\multirow{2}{2.8cm}{%
\raggedright
IT output exposes only $S_1$ and blocking condition at the IT abstraction boundary. OT output exposes $S_{7}$, $S_{11}$, $S_{12}$ fault predicates with IT context at $C_1$ only. Adversary cannot reconstruct opposite domain visibility from either output, consistent with Proposition~\ref{prop:disclosure-bound}.} \\
\cline{2-2}\cline{5-6}

& OT (Invalid) &
&
&
$\text{MOT}_2$ &
Path is blocked at $\text{MOT}_2$. OT analyst receives hardening pointer: PLC operational state unsatisfied; IT prerequisites communicated at $C_1$ only. &
\\
& & & & & & \\
% \hline
% \end{tabular}%
% }
% \end{table}

\hline
\end{tabular}%
}
\end{minipage}
\par\vspace{1.5em}

% \FloatBarrier
% \vspace{-0.5em}

\begin{multicols}{2}

\section{The Flare System}
\label{app:flare}

\begin{center}
    \includegraphics[width=0.95\linewidth]{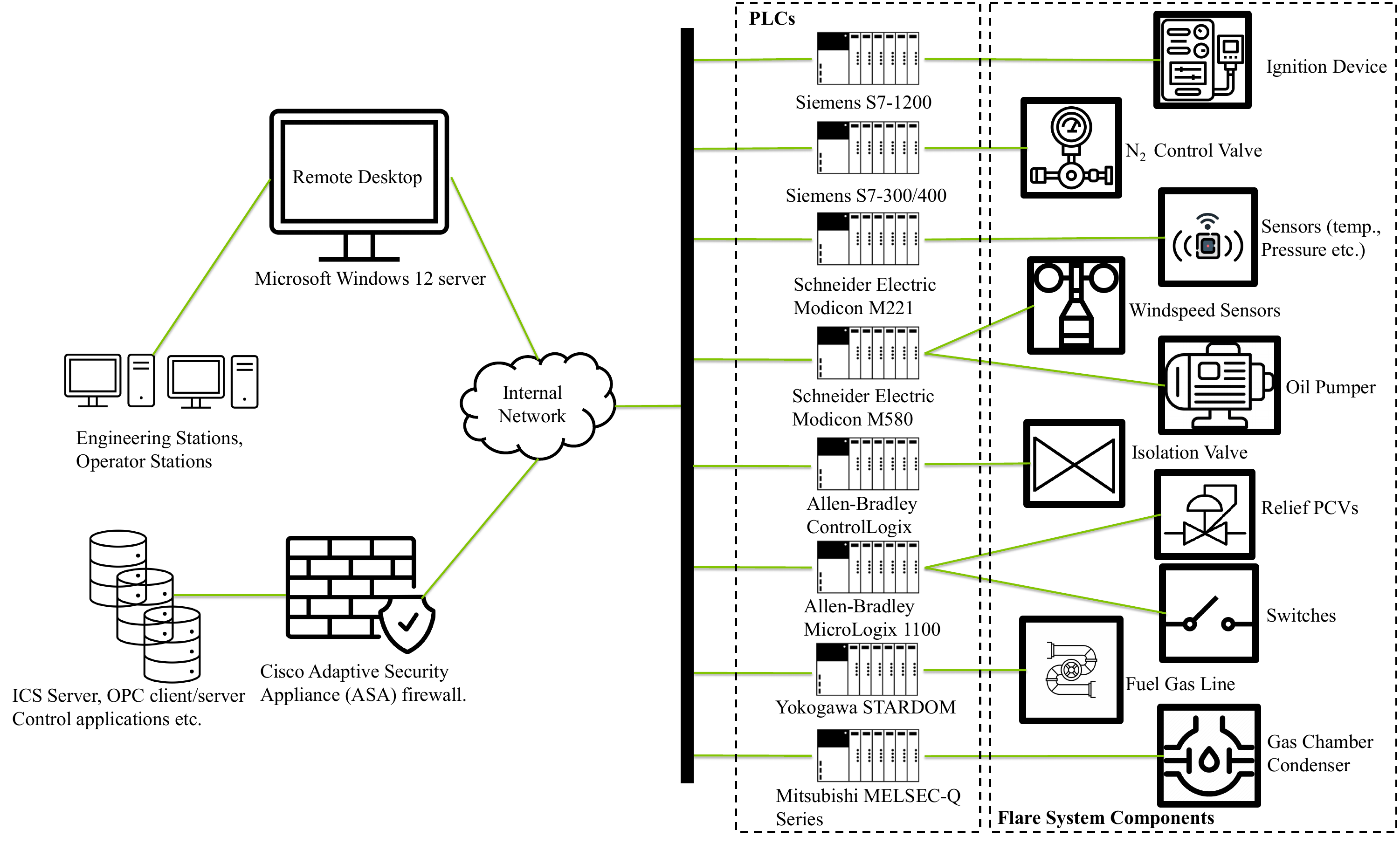}
    \captionof{figure}{Schematic overview of the Flare System's IT and OT components.}
    \label{fig:flare}
\end{center}

\begin{center}
\tiny
\setlength{\tabcolsep}{2pt}
\renewcommand{\arraystretch}{1.15}
\captionof{table}{Flare System PLCs, CVE vulnerabilities, and functionality.}
\label{tab:plc_cves}

\begin{tabularx}{\linewidth}{p{2.5cm}p{1.55cm}X}
% \rowcolor{lightgray}
\hline
\textbf{PLC Name} & \textbf{CVE(s)} & \textbf{Operation in OT Network} \\
\hline
Siemens S7-1200 & CVE-2017-12741, CVE-2018-13810 &
Control the Ignition Device. \\
% \hline
Siemens S7-300/400 & CVE-2017-12741, CVE-2016-8673 &
Control opening/closing of Nitrogen Valve. \\
% \hline
Schneider Modicon M221 & CVE-2017-6032, CVE-2017-6034 &
Monitor Mechanical \& Instrumental sensor readings. \\
% \hline
Schneider Modicon M580 & CVE-2019-6815, CVE-2018-7789 &
Monitor windspeed and pumping of flare system. \\
% \hline
Allen-Bradley ControlLogix & CVE-2017-9312 &
Control the isolation valve. \\
% \hline
Allen-Bradley MicroLogix 1100 & CVE-2017-12089 &
Control relief PCVs and flare switching. \\
% \hline
Yokogawa STARDOM & CVE-2018-10594 &
Control and monitor Fuel Gas line. \\
% \hline
Mitsubishi MELSEC-Q & CVE-2017-9638 &
Monitor condensate in Fuel Gas. \\
\hline
\end{tabularx}
\end{center}
% \end{multicols}

% \begin{multicols}{2}
\section{High to low-level Predicate Mapping}
\label{app:evaluation}

\noindent
\begin{minipage}{\linewidth}
\centering
\tiny
\setlength{\tabcolsep}{2pt}
\renewcommand{\arraystretch}{1.15}
\captionof{table}{Mapping between low-level ICS Cyber Kill Chain stages and high-level abstraction predicates used in bridge node construction.}
\label{tab:low_high_bridge_mapping}
\vspace{2mm}
\resizebox{\linewidth}{!}{%
\begin{tabular}{c|p{3cm} c c p{4.2cm}}
\hline
\textbf{Domain} &
\textbf{MITRE ATT\&CK for ICS Tactic} &
\textbf{Low-Level} &
\textbf{High-Level} &
\textbf{High-Level Predicate Representation} \\
\hline

IT    & Initial Access              & $S_1$    & $C_1$    & \texttt{has\_initial\_access} \\
IT/OT & Execution                   & $S_2$    & $C_2$    & \texttt{has\_execution} \\
IT    & Persistence                 & $S_3$    & $C_3$    & \texttt{has\_persistence} \\
IT    & Privilege Escalation        & $S_4$    & $C_4$    & \texttt{has\_privilege\_escalation} \\
IT    & Command and Control         & $S_5$    & $C_5$    & \texttt{has\_command\_and\_control} \\
IT    & Discovery                   & $S_6$    & $C_6$    & \texttt{has\_discovery} \\
IT/OT & Lateral Movement            & $S_7$    & $C_7$    & \texttt{has\_lateral\_movement} \\
IT    & Collection                  & $S_8$    & $C_8$    & \texttt{has\_collection} \\
IT    & Evasion                     & $S_9$    & $C_9$    & \texttt{has\_evasion} \\
OT    & Inhibit Response Function   & $S_{10}$ & $C_{10}$ &
\path{has_inhibit_response_function} \\
OT    & Impair Process Control      & $S_{11}$ & $C_{11}$ &
\texttt{has\_impair\_process\_control} \\
OT    & Impact                      & $S_{12}$ & $C_{12}$ &
\texttt{has\_impact} \\
\hline
\end{tabular}%
}
\end{minipage}
% \end{multicols}

% \begin{multicols}{2}
\section{Property 2 Verification Results}
\label{app:property2}

\noindent
\begin{minipage}{\linewidth}
\centering
\tiny
\setlength{\tabcolsep}{2pt}
\renewcommand{\arraystretch}{1.15}
\captionof{table}{
Property~\ref{property:output-confinement} verification across
100 queries and 300 artifacts. An out-of-view token is an
$S_i$ predicate outside the analyst's authorized explanation node.
}
\label{tab:property2}
\vspace{2mm}
\resizebox{\linewidth}{!}{%
\begin{tabular}{l l c c c c}
\hline
\textbf{Query Type} &
\textbf{Analyst} &
\textbf{Queries} &
\textbf{Artifacts} &
\textbf{Out-of-View Tokens} &
\textbf{Leakage Rate} \\
\hline

Path Validity (Complete)    & IT & 13 & 39 & 0 & 0/39 \\
Path Validity (Repaired)    & IT & 12 & 36 & 0 & 0/36 \\
Path Validity (Complete)    & OT & 13 & 39 & 0 & 0/39 \\
Path Validity (Repaired)    & OT & 12 & 36 & 0 & 0/36 \\
Path Invalidity (Blocked)   & IT & 13 & 39 & 0 & 0/39 \\
Path Invalidity (Blocked)   & OT & 12 & 36 & 0 & 0/36 \\
Path Invalidity (Wrong CVE) & IT & 13 & 39 & 0 & 0/39 \\
Path Invalidity (Wrong CVE) & OT & 12 & 36 & 0 & 0/36 \\
\hline

\textbf{Total} & \textbf{All} &
\textbf{100} & \textbf{300} &
\textbf{0} & \textbf{0/300} \\
\hline
\end{tabular}%
}
\end{minipage}

\end{multicols}